\documentclass[a4paper,onecolumn,unpublished,11pt]{quantumarticle}
\pdfoutput=1
\usepackage[english]{babel}
\usepackage[numbers,sort&compress]{natbib}
\usepackage{amsmath}
\usepackage{amssymb}
\usepackage{amsthm}
\usepackage{graphicx}
\usepackage{mathrsfs}
\usepackage{enumitem}
\usepackage{csquotes}
\usepackage{braket}
\usepackage{mathtools}
\usepackage{booktabs}
\usepackage{thmtools}
\usepackage{algorithm}

\usepackage{algpseudocode}
\usepackage{verbatim}
\usepackage{soul}
\usepackage{apptools}
\usepackage[most]{tcolorbox}
\usepackage{tikz}
\usepackage[svgnames]{xcolor}
\usetikzlibrary{arrows.meta}

\providecommand{\myvec}[1]{\ensuremath{\boldsymbol{#1}}}

\providecommand{\xx}{\ensuremath{\myvec{x}}}

\providecommand{\calO}{\ensuremath{\mathcal{O}}}

\newtheorem{theorem}{Theorem}

\newtheorem{proposition}[theorem]{Proposition}
\newtheorem{definition}[theorem]{Definition}
\newtheorem{corollary}[theorem]{Corollary}
\newtheorem{observation}[theorem]{Observation}
\newtheorem{remark}[theorem]{Remark}

\newcommand{\Abs}[1]{\left\lvert #1 \right\rvert}

\newcommand{\maxlin}{\textup{max-LINSAT}} 
\newcommand{\maxxor}{\textup{max-XORSAT}} 
\newcommand{\maxkxor}{\textup{max-$k$-XORSAT}} 
\newcommand{\maxeklinq}[2]{\textup{max-E}#1\textup{-LIN-}#2}

\newcommand{\maxeksat}[1]{\textup{max-E}#1\textup{-SAT}}

\newcommand{\maxekxor}[1]{\textup{max-E}#1\textup{-XORSAT}}
\newcommand{\maxlinqr}[2]{\textup{max-LINSAT(}#1\textup{,}#2\textup{)}}
\newcommand{\maxeklinsat}[3]{\textup{max-E}#1\textup{-LINSAT}(#2,#3)}

\newcommand{\NP}{\mathsf{NP}}
\newcommand{\Pclass}{\mathsf{P}}
\newcommand{\RP}{\mathsf{RP}}
\newcommand{\coRP}{\mathsf{coRP}}
\newcommand{\BQP}{\mathsf{BQP}}
\newcommand{\BPP}{\mathsf{BPP}}

\newcommand{\OPT}{\mathrm{OPT}}

\newif\ifverbose
\verbosetrue

\definecolor{max}{RGB}{164,12,52}

\usepackage[most]{tcolorbox}

\tcbset {
  base/.style={
    arc=0mm, 
    bottomtitle=0.5mm,
    boxrule=0mm,
    colbacktitle=black!10!white, 
    coltitle=black, 
    fonttitle=\bfseries, 
    left=2.5mm,
    leftrule=1mm,
    right=3.5mm,
    title={#1},
    toptitle=0.75mm, 
  }
}

\definecolor{brandblue}{rgb}{0.34, 0.7, 1}
\newtcolorbox{mainbox}[1]{
  colframe=brandblue, 
  base={#1}
}

\newtcolorbox{subbox}[1]{
  colframe=black!30!white,
  base={#1}
}

\newcommand{\dccqs}{Dahlem Center for Complex Quantum Systems, Freie Universit{\"a}t Berlin, 14195 Berlin, Germany}
\newcommand{\hzb}{Helmholtz-Zentrum Berlin f{\"u}r Materialien und Energie, 14109 Berlin, Germany}
\newcommand{\hhi}{Fraunhofer Heinrich Hertz Institute, 10587 Berlin, Germany}
\newcommand{\tub}{Electrical Engineering and Computer Science, Technische Universität Berlin, 10587 Berlin, Germany}

\usepackage[bookmarks=true,colorlinks=true,linkcolor=teal,citecolor=teal,urlcolor=teal,bookmarksopen=true,bookmarksopenlevel=1]{hyperref}
\usepackage[capitalize]{cleveref}

\begin{document}

\title{Approximability limits for bounded-degree \newline max-LINSAT and implications for decoded quantum interferometry}

    \author{Maximilian\ J.\ Kramer}
    \email{m.kramer@fu-berlin.de}
    \affiliation{\dccqs}

    \author{Carsten\ Schubert}
    \affiliation{\tub}

    \author{Jens\ Eisert}
    \affiliation{\dccqs}
    \affiliation{\hzb}
    \affiliation{\hhi}

\begin{abstract}
For general \maxkxor{} with $k \geq 3$, no polynomial-time algorithm can do substantially better than random guessing on worst-case instances unless $\mathsf{P} = \mathsf{NP}$: approximating beyond the random-assignment value of $1/2$ is $\mathsf{NP}$-hard. The picture changes when each variable appears in at most $D$ constraints. In that bounded-degree setting, polynomial-time algorithms can provably beat the random baseline by an additive amount of order $1/\sqrt{D}$. For Boolean instances, this scaling is known to be optimal: the matching hardness result is due to Trevisan, while the corresponding algorithmic guarantee was established by Barak et al.
Whether the same holds over general finite fields, and what it implies for quantum algorithms, has not been established.
We make this connection explicit and extend the hardness to \maxeklinsat{$k$}{$q$}{$r$} with bounded degree $D$ and over arbitrary finite fields $\mathbb{F}_q$, proving that it is $\NP$-hard to exceed $r/q + \calO_{q,r}(1/\sqrt{D})$. These results provide the complexity-theoretic benchmark for the bounded-degree instances targeted by decoded quantum interferometry (DQI), QAOA, and classical heuristics.
Any quantum advantage on bounded-degree instances is therefore confined to the constant prefactor. We further show that in the context of DQI and on $(k,D)$-regular instances, this prefactor is sensitive to the nature of the decoder: DQI with classical decoders faces an information-theoretic $1/\sqrt{D \log D}$ barrier that prevents it from matching the hardness scaling, while DQI with quantum decoders is compatible with the $1/\sqrt{D}$ scaling---identifying quantum decoding as the key ingredient for matching the complexity-theoretic scaling with DQI.
\end{abstract}
\maketitle

\section{Introduction}

A central goal in the study of $\NP$-hard optimization problems is to characterize the best approximation ratios achievable by polynomial-time algorithms. Among such problems, sparse constraint satisfaction problems, and in particular \maxkxor{} and related systems of linear constraints over finite fields, play a particularly important role.
In \maxkxor{}, one is given a system of parity constraints involving at most $k$ Boolean variables each, and the goal is to find an assignment satisfying as many constraints as possible. This is a problem of substantial interest in its own right, but has also recently risen to prominence in the context of quantum algorithms for optimization, for which it is used as a testbed for achieving potential quantum advantages \cite{Jordan2024DQI}.

For general \maxkxor{} with $k \geq 3$, the random-assignment threshold $1/2$ is a tight inapproximability wall: H\r{a}stad~\cite{hastad2001} has proven that no polynomial-time algorithm can exceed this threshold on worst-case instances, assuming $\Pclass \neq \NP$. We have recently extended this inapproximability result to \maxlinqr{$q$}{$r$} over arbitrary finite fields $\mathbb{F}_q$ and acceptance sets of size $r$ with threshold $r/q$~\cite{kramer2026tight}. The instances produced by these PCP-reductions, however, have unbounded variable degree $D$ and variables to clauses ratio $n/m = o(1)$.

To put this into perspective, it is important to state that PCP-based inapproximability~\cite{AS98,ALMSS98} does not tell the full story: the picture changes qualitatively when instances possess \emph{exploitable structure}. In this case, both the approximability landscape and the question of quantum advantage change qualitatively. 
At one extreme, highly structured problems admit provable quantum--classical separations under standard complexity assumptions~\cite{Pirnay2024,szegedy2022,Yamakawa2024,buhrman2025formalframeworkquantumadvantage}.
At the other extreme, worst-case PCP instances have no exploitable structure at all, and quantum algorithms face the same inapproximability walls as classical ones~\cite{kramer2026tight}. 
The scientifically interesting regime lies between these extremes: instances with enough structure for algorithms to exploit, 
but where that structure arises naturally from the problem rather than being engineered to enable separation results.

\emph{Decoded quantum interferometry} (DQI)~\cite{Jordan2024DQI} and follow-up results \cite{
chailloux2024softdecoders, Patamawisut2025, ralli2025DQI, bu2025DQInoise, briaud2025quantumadvantagesolvingmultivariate, sabater2025solvingindustrialintegerlinear, marwaha2025complexitydecodedquantuminterferometry, anschuetz2025DQI, piveteau2025_quantum_decoding, parekh2025DQI_maxcut, gu2025algebraicgeometrycodesdecoded, kothari2025exponentialquantumspeedupmathrmsisinfty, schmidhuber2025hamiltoniandecodedquantuminterferometry, khattar2025verifiablequantumadvantageoptimized, chailloux2025opixsoftdecoders, rosmanis2026nearlylineartimedecodedquantum, bu2026hamiltoniandecodedquantuminterferometry,kramer2026tight, sun2026,shutty2026optimizationusinglocallyquantumdecoders,garrido2026,thelen2026constraintcodedqikit,krajenbrink2026} operate squarely in this regime as a novel and compelling quantum approach for approximating \maxxor{} and \maxlin{} more generally. 
DQI reduces an optimization problem to a (quantum) decoding problem via a Fourier transform over the underlying finite field, an idea originating in Regev's reduction~\cite{ATS03,Regev2004,AR2005,Regev2009}.
Its performance is governed by the so-called semicircle law which is a function of the decodable structure present in the instance~\cite{Jordan2024DQI}. This makes DQI a compelling framework for exploring quantum advantage in approximation: it cannot help on structureless worst-case instances~\cite{kramer2026tight}, but may outperform classical algorithms when sufficient decodable structure is present~\cite{Jordan2024DQI,gu2025algebraicgeometrycodesdecoded}. This is a promising avenue toward quantum advantage for \emph{practically relevant problems}~\cite{MindTheGaps,VastWorld,Abbas_2024}.

Bounded-degree \maxkxor{}-instances---where each variable appears in at most $D$ constraints---occupy precisely this intermediate regime. 
On the worst-case side, polynomial-time algorithms are known to exceed the random-assignment threshold by 
$\Omega(1/D)$~\cite{Hastad2000}, 
$\Omega(D^{-3/4})$~\cite{FGG14bounded}, and 
$\Omega(1/\sqrt{D})$~\cite{Barak2015} on arbitrary bounded-degree instances.
On the average-case side, Shutty et al.~\cite{shutty2026optimizationusinglocallyquantumdecoders} recently undertook a systematic study on instances drawn from Gallager's $(k,D)$-regular ensemble, showing that DQI/Regev with various decoders, QAOA~\cite{FGG14}, and classical heuristics such as Turbo Prange all achieve approximation ratios of the form $1/2 + c_k/\sqrt{D}$ with algorithm- and $k$-dependent constants $c_k$. 
Notably, even a relatively simple locally-quantum decoder (FGUM)  combined with Regev's reduction suffices to match the $1/\sqrt{D}$ scaling of the best classical heuristic~\cite{shutty2026optimizationusinglocallyquantumdecoders}, suggesting that quantum decoding is a promising route toward quantum advantage in this regime.

In the light of all this, a natural question is whether this shared $1/\sqrt{D}$ scaling is coincidental or fundamental, and whether any algorithm---quantum or classical---can do better. This is a crucial question: it provides the goalpost for bounded-degree algorithm design and determines the room for improvement by quantum or classical means. Understanding whether the $1/\sqrt{D}$ scaling is a hard ceiling or an artifact of current techniques is, therefore, important both for classical approximation theory and for assessing the potential of quantum approaches to combinatorial optimization.

In this work, we make three concrete technical contributions and interpret the implications. First, we make explicit a complexity-theoretic benchmark for worst-case bounded-degree \maxekxor{$k$} that follows from connecting two known results not previously brought into contact with the DQI literature: Trevisan's randomized degree-reduction technique~\cite{Trevisan2001} and the algorithmic result of Barak et al.~\cite{Barak2015}.
Together, these establish that the optimal worst-case approximation ratio is $1/2 + \Theta_k(1/\sqrt{D})$, which is tight for odd $k$ up to constant factors between known algorithms and hardness. This, in turn, confines any potential quantum advantage on bounded-degree \maxekxor{$k$} to the constant prefactor.
Second, we extend this bounded-degree hardness from \maxekxor{$k$} to \maxeklinsat{$k$}{$q$}{$r$} over arbitrary finite fields $\mathbb{F}_q$ with acceptance sets of size $r$, proving that the threshold $r/q + \calO_{q,r}(1/\sqrt{D})$ is $\NP$-hard to exceed.
Third, we derive information-theoretic ceilings on DQI's performance by extending the analysis of Ref.~\cite[Sections~13.1--13.2]{Jordan2024DQI} to $(k,D)$-regular \maxeklinsat{$k$}{$q$}{$r$}, showing that quantum decoders are information-theoretically necessary for DQI to match the complexity-theoretic scaling: DQI with classical decoders is limited to $r/q + \calO_{q,r}(1/\sqrt{D \log D})$, while DQI with quantum decoders is compatible with the $1/\sqrt{D}$ scaling.

We note that confinement of possible quantum advantage to the constant prefactor is not a limitation in practical settings: for all $(q,r)$, the approximation ratio lies in $[r/q, 1]$, so the constant directly determines the algorithmic advantage over random assignment. In this setting, a constant-factor improvement can be operationally significant.

\section{Preliminaries}
\subsection{Problem definitions}
We start by defining the optimization problems considered in this work, in particular focusing on the (bounded-degree) \maxlin-problem as defined in Ref.~\cite{Jordan2024DQI}. Throughout, we denote by $\OPT$ the maximum number of satisfiable constraints and define an approximation ratio $\alpha$ for a solution satisfying $s$ out of $m$ constraints to be $\alpha \coloneq s / m$.

\begin{definition}[\maxlin{} over $\mathbb{F}_q$]
    Let $\mathbb{F}_q$ be the finite field of order $q$ and let $B \in \mathbb{F}_q^{m \times n}$. For each $i=1,\ldots,m$, let $F_i \subset \mathbb{F}_q$ be an arbitrary subset of $\mathbb{F}_q$, which yields a corresponding constraint $\sum_{j=1}^n B_{i, j} x_j \in F_i$. The \maxlin{} task is to find $\xx \in \mathbb{F}_q^n$ satisfying as many as possible of these $m$ constraints. The \emph{arity} of constraint $i$ 
    is the number of nonzero entries in row $i$ of $B$. The \emph{degree} of variable $x_j$ is the number of constraints in which it appears (i.e.,  the number of nonzero entries in column $j$ of $B$). We say an instance has \emph{arity at most $k$} if every constraint has arity at most $k$, and \emph{degree at most $D$} if every variable has degree at 
    most $D$.
\end{definition}

We assume that the acceptance sets $F_i$ are given as an explicit list. In what follows, the arity is the number of variables participating in a single constraint.

\begin{definition}[\maxlin$(q,r)$ and 
variants]\label{def:max-linsat-qr}
We write \maxlinqr{$q$}{$r$} for the restriction of \maxlin{} over $\mathbb{F}_q$ to instances in which $|F_i| = r$ for every $i$. The prefix ``\textup{E}$k$-'' restricts to instances of arity \emph{exactly} $k$, while ``$k$-'' restricts to arity \emph{at most} $k$. Thus:
\begin{itemize}
    \item \maxlinqr{$q$}{$1$} consists of systems of linear equations $L_i(\xx) = b_i$ over $\mathbb{F}_q$ (singleton acceptance sets);
    \item \maxeklinq{$k$}{$q$} is \maxlinqr{$q$}{$1$} restricted to arity exactly $k$;
    \item \maxeklinsat{$k$}{$q$}{$r$} is \maxlinqr{$q$}{$r$} 
    restricted to arity exactly $k$;
    \item \maxxor{} $= \maxlin(2,1)$ is the special case over $\mathbb{F}_2$; and
    \item \maxkxor{} restricts \maxxor{} to arity at most $k$, with \maxekxor{$k$} denoting arity exactly $k$.
\end{itemize}
In particular, \maxekxor{$k$} and \maxeklinq{$k$}{$2$} are the same problem.
\end{definition}

When every variable has degree exactly $D$, and every constraint has arity exactly $k$, the instance is called \emph{$(k,D)$-regular}; its constraint--variable bipartite graph is then $(k,D)$-biregular and satisfies $km = Dn$. We will also need the following unrelated problem, whose bounded-degree inapproximability was established by Trevisan~\cite{Trevisan2001}:

\begin{definition}[\maxeksat{$k$}]\label{def:ek-sat}
An instance of \maxeksat{$k$} consists of $n$ Boolean variables $x_1, \dots, x_n \in \{0,1\}$ and $m$ clauses, each being a disjunction of exactly $k$ literals (a literal is a variable $x_j$ or its negation $\overline{x}_j$). The goal is to find an assignment maximizing the number of satisfied clauses. 
A random assignment satisfies each clause with probability $1 - 2^{-k}$, giving a baseline of $7/8$ for $k = 3$.
\end{definition}

We explicitly emphasize that \maxeksat{$3$} and \maxeklinq{$3$}{$2$} $=$ \maxekxor{$3$} are distinct problems: in the former, each constraint is a disjunction of three literals (satisfiable by $7$ out of $8$ assignments), while in the latter each constraint is a linear equation over $\mathbb{F}_2$ (satisfiable by exactly half of all assignments). Trevisan's degree-reduction technique~\cite{Trevisan2001} has originally been
stated for \maxeksat{$3$} but applies to any CSP; we exploit this generality in 
\cref{thm:bounded_degree_linsat}.

\subsection{Inapproximability landscape}

We now summarize the inapproximability landscape (i.e., PCP-hardness statements), proceeding from unbounded-degree to bounded-degree instances.

\subsubsection{Unbounded degree}

One of the building blocks of the proof we establish is the seminal work of
Ref.~\cite{hastad2001}.
H\r{a}stad~\cite[Theorem~5.9]{hastad2001}, building on the celebrated PCP theorem~\cite{AS98,ALMSS98}:
\begin{theorem}[Inapproximability of \maxeklinq{$3$}{$\Gamma$}~\cite{hastad2001}]
\label{thm:hastad}
    For every finite Abelian group~$\Gamma$ and every $\varepsilon>0$, it is $\NP$-hard to approximate \textup{max-E}$3$\textup{-LIN-}$\Gamma$ within a factor $|\Gamma|-\varepsilon$. Equivalently, \textup{max-E}$3$\textup{-LIN-}$\Gamma$ is non-approximable beyond the random assignment threshold: it is $\NP$-hard to distinguish, given an instance of \textup{max-E}$3$\textup{-LIN-}$\Gamma$, between
    \begin{itemize}
       \item[\textup{(Y)}] instances with $\OPT \ge (1-\varepsilon)\,m$ and
       \item[\textup{(N)}] instances with $\OPT \le (1/|\Gamma|+\varepsilon)\,m$.
    \end{itemize}
    In particular, setting $\Gamma=(\mathbb{F}_q, +)$ for a finite field $\mathbb{F}_q$ yields the $(1-\varepsilon,\;1/q+\varepsilon)$-hardness of \textup{max-E}$3$\textup{-LIN-}$q$.
\end{theorem}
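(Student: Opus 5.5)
This is Håstad's theorem, which the paper cites rather than proves, so I would sketch his argument. The plan is a gap-preserving reduction from Label Cover, obtained from the PCP theorem plus Raz's parallel repetition, to \textup{max-E}$3$\textup{-LIN-}$\Gamma$. The vehicle is a three-query long-code test whose acceptance predicate is a single linear equation over $\Gamma$.

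\textbf{Step 1: start from Label Cover.} By the PCP theorem and parallel repetition, for every $\delta>0$ there are label sizes $L, R$ such that the following is $\NP$-hard. Given a bipartite Label Cover instance with projections $\pi_e\colon[R]\to[L]$, distinguish instances that are fully satisfiable from instances where no labeling satisfies more than a $\delta$ fraction of edges.

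\textbf{Step 2: encode labels and define the test.} Each vertex label is encoded by a folded long code over $\Gamma$, which is a table $A\colon \Gamma^{[L]}\to\Gamma$ (or $\Gamma^{[R]}\to\Gamma$). Folding means $A(f+c)=A(f)+c$ for constants $c\in\Gamma$; it is implemented by choosing coset representatives, so the tables stay the variables. The test proceeds as follows.
\begin{enumerate}
    \item Pick a random edge $(u,w)$, a uniform $f\colon[L]\to\Gamma$, and a uniform $g\colon[R]\to\Gamma$.
    \item Pick noise $\mu\colon[R]\to\Gamma$ that is $0$ with probability $1-\varepsilon$ and uniform otherwise.
    \item Set $h=-(f\circ\pi_e)-g+\mu$ and accept iff $A_u(f)+B_w(g)+B_w(h)=0$.
\end{enumerate}
Each test is an equation in exactly three variables. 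After moving folding constants to the right-hand side, it becomes an E3-LIN-$\Gamma$ constraint, and weighting by test probability gives the instance.

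\textbf{Step 3: completeness and soundness.} For completeness, honest dictator codings pass whenever $\mu$ vanishes on the relevant coordinate, which happens with probability $\ge 1-\varepsilon$. For soundness, expand the acceptance probability in characters of $\Gamma$ and Fourier-analyze the long codes. Folding kills the trivial character contribution, which leaves exactly the $1/|\Gamma|$ baseline. Any excess acceptance $\eta$ forces $\sum_\alpha |\hat B_\alpha|^2|\hat A_{\pi(\alpha)}|\,(1-\varepsilon)^{|\alpha|}$ to be large. From this one decodes a randomized labeling: sample $\alpha$ with probability $|\hat B_\alpha|^2$ and output a uniform element of its support. Parseval together with the noise bound (small $|\alpha|$) shows this labeling satisfies about $\eta^2\varepsilon$ of the edges. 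Choosing $\delta$ below this quantity contradicts soundness, so every assignment satisfies at most $1/|\Gamma|+\varepsilon$ of the constraints.

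\textbf{Main obstacle.} I expect the Fourier soundness analysis over a general Abelian $\Gamma$ to be the hardest part. One has to handle all nontrivial characters simultaneously, including the projected support $\pi(\alpha)$ and the bias introduced by folding, while keeping the loss polynomial in $\eta$ and $\varepsilon$ so that a fixed $\delta$ suffices.
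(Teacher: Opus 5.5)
Your sketch is correct and is essentially H\r{a}stad's original argument: Label Cover from the PCP theorem plus parallel repetition, folded long codes over $\Gamma$, the noisy three-query test, and Fourier decoding into a labeling of value $\Omega(\eta^2\varepsilon)$. The paper does not reprove the theorem but imports it from \cite[Theorem~5.9]{hastad2001}, so your route is the one it relies on. A few points should be stated precisely: $\pi(\alpha)$ in the Fourier identity is the character-valued projection $x\mapsto\prod_{y\in\pi^{-1}(x)}\alpha(y)$, whose support is only contained in the projected support (which is exactly what the $1/|\alpha|$ decoding bound uses); a single nontrivial character $\gamma\in\widehat{\Gamma}$ carrying the excess must be fixed before decoding; and the negligible fraction of tests where $g$ and $h$ lie in the same folding class should be dropped so that every equation has exactly three distinct variables.
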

\begin{proposition}[Inapproximability of \maxeklinq{$k$}{$q$}]\label{rem:arity-k}
The arity-$3$ hardness of approximating \maxeklinq{$3$}{$q$}~\cite{hastad2001} extends to every $k \geq 3$: the linear predicate $\{x \in \mathbb{F}_q^k : \sum_i x_i = b\}$ is a balanced pairwise-independent coset of density $1/q$, so $(1-\varepsilon,\ 1/q+\varepsilon)$--inapproximability of \maxeklinq{$k$}{$q$} follows from Chan~\cite[Theorem~1.1]{Chan2016}.
\end{proposition}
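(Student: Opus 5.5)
The plan is to derive the statement as a direct instance of Chan's theorem~\cite[Theorem~1.1]{Chan2016}. For a predicate $P \subseteq \mathbb{F}_q^k$ whose support is a balanced pairwise-independent subgroup (or coset of one), that theorem gives $(1-\varepsilon,\ |P|/q^k+\varepsilon)$-hardness for $\textup{max-CSP}(P)$ with literals. I will also give a second, elementary route that goes through \cref{thm:hastad} and avoids Chan entirely. Both routes take $k \ge 3$ and $b \in \mathbb{F}_q$ fixed.

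For the Chan route, set $P_b = \{x \in \mathbb{F}_q^k : \sum_i x_i = b\}$. The steps are as follows.
\begin{enumerate}
\item $P_0$ is a subgroup of $\mathbb{F}_q^k$, and $P_b$ is a coset of it.
\item $|P_b| = q^{k-1}$, so the density of $P_b$ is $1/q$.
\item The uniform distribution on $P_b$ is balanced and pairwise independent when $k \ge 3$. Fix $i \ne j$ and $(a,c) \in \mathbb{F}_q^2$. The number of $x \in P_b$ with $x_i = a$ and $x_j = c$ is $q^{k-3}$, since the remaining $k-2 \ge 1$ coordinates must satisfy one affine equation. This count does not depend on $(a,c)$. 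Single coordinates are uniform by the same count.
\end{enumerate}
Chan then gives hardness of distinguishing $\OPT \ge (1-\varepsilon)m$ from $\OPT \le (1/q+\varepsilon)m$ for the CSP whose constraints are $(x_{i_1}+c_1,\dots,x_{i_k}+c_k) \in P_b$. Each such constraint is the linear equation $\sum_\ell x_{i_\ell} = b - \sum_\ell c_\ell$ of arity exactly $k$, so the CSP is \maxeklinq{$k$}{$q$}.

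The main obstacle on this route is the bookkeeping needed to match Chan's setting to ours. I must check three things:
\begin{enumerate}
\item The $k$ variables in a constraint are distinct, so the arity is exactly $k$. If Chan's instances allow repeats, I would note that his label-cover-based reduction produces distinct variables, or else discard the affected constraints, which are a vanishing fraction.
\item Chan's ``literals'' are additive shifts by field elements, which keeps each constraint linear.
\item His theorem covers cosets over $\mathbb{F}_q$ for arbitrary prime powers $q$, that is, abelian groups.
\end{enumerate}

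For the elementary route, I would pad a \maxeklinq{$3$}{$q$} instance from \cref{thm:hastad}. I introduce fresh variables $y_1,\dots,y_{k-3}$ that are shared by all constraints, and replace $x_a+x_b+x_c=b$ by $x_a+x_b+x_c+\sum_t y_t = b$.
\begin{enumerate}
\item Completeness is preserved by setting $y = 0$.
\item For soundness, any assignment to the new instance with shift $s = \sum_t y_t$ satisfies the same number of constraints as the original instance with right-hand sides shifted by $s$. That number is not directly controlled by the original $\OPT$.
\end{enumerate}
To repair soundness, I would instead give each constraint its own fresh padding variables $y_{i,t}$ and add balancing gadgets, or more simply appeal to the fact that H\r{a}stad's reduction is uniform in the arity. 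Given these complications, I would present the Chan argument as the proof and mention the padding route only as a remark.
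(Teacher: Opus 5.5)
Your proposal is correct and follows the same route as the paper. The paper simply observes that $\{x \in \mathbb{F}_q^k : \sum_i x_i = b\}$ is a balanced pairwise-independent coset of density $1/q$ and invokes Chan's Theorem~1.1; your count of $q^{k-3}$ solutions per fixed pair of coordinates (using $k \ge 3$), and your check that additive-shift literals keep every constraint a linear equation of arity exactly $k$, make explicit what the paper only asserts, and the padding route is rightly set aside since its soundness is not established as stated.
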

We have recently generalized the inapproximability of \maxeklinq{$3$}{$q$} in Ref.~\cite{kramer2026tight} to \maxlinqr{$q$}{$r$} with arbitrary-sized acceptance sets, yielding the following inapproximability theorem:
\begin{theorem}[Inapproximability of \maxlinqr{$q$}{$r$}~\cite{kramer2026tight}]
\label{thm:inapproximability_of_max_linsat}
For every finite field $\mathbb{F}_q$, every integer $1\leq r \leq q-1$, and every $\varepsilon>0$, it is $\NP$-hard to distinguish, given an instance of \maxlinqr{$q$}{$r$}, between
\begin{itemize}
    \item[\textup{(Y)}] instances with $\OPT \ge (1-\varepsilon)\,m$ and
    \item[\textup{(N)}] instances with $\OPT \le (r/q+\varepsilon)\,m$.
\end{itemize}
This means it is $\NP$-hard to approximate within a factor strictly better than $r/q+\varepsilon$ for every $\varepsilon>0$.
\end{theorem}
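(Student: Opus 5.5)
We reduce from the $(1-\varepsilon,\,1/q+\varepsilon)$-hardness of \maxeklinq{$3$}{$q$} in \cref{thm:hastad} by a gadget that turns each linear equation into a single \maxlinqr{$q$}{$r$} constraint, or a small constant family of such constraints. Both completeness and soundness must carry over with the threshold moving from $1/q$ to $r/q$.

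\textbf{Construction.} Start from an instance with equations $L_i(\xx)=b_i$ over $\mathbb{F}_q$. For each equation and each $r$-subset $S\subseteq\mathbb{F}_q$ containing $b_i$, we could emit the constraint $L_i(\xx)\in S$, weighted uniformly. Then a satisfied equation satisfies all its copies. An unsatisfied equation, with $L_i(\xx)=c\neq b_i$, satisfies exactly the fraction $(r-1)/(q-1)$ of its copies, namely those $S$ that also contain $c$. This gives soundness $\tfrac1q\cdot1+(1-\tfrac1q)\tfrac{r-1}{q-1}+\varepsilon'$, and since $(q-1)\tfrac{r-1}{q-1}\cdot\tfrac1q=(r-1)/q$, this equals $r/q+\varepsilon'$ exactly. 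Completeness stays at $1-\varepsilon$. The number of copies per equation, $\binom{q-1}{r-1}$, is a constant, so the reduction is polynomial, and duplicated constraints are allowed in the instance format.

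\textbf{Key steps.} (1) Fix $\varepsilon$ and apply \cref{thm:hastad} with $\varepsilon$ replaced by a suitably smaller $\delta$. (2) Build the replicated instance described above. (3) For completeness, show that any assignment satisfying a $(1-\delta)$ fraction of the equations satisfies at least a $(1-\delta)$ fraction of the new constraints. (4) For soundness, fix an arbitrary assignment and let $\beta$ be the fraction of equations it satisfies. Its value on the new instance is $\beta+(1-\beta)\tfrac{r-1}{q-1}$, which is increasing in $\beta$. Hence $\beta\le 1/q+\delta$ gives at most $r/q+\delta(1-\tfrac{r-1}{q-1})\le r/q+\delta$.

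\textbf{Main obstacle.} The delicate step is soundness in step (4). The maximum over assignments must be bounded uniformly, and the count of unsatisfied copies has to be exact rather than only an average. This works because the fraction of copies satisfied depends only on whether $L_i(\xx)=b_i$, and not on which wrong value $c$ is taken. That symmetry is the reason to use all $r$-subsets containing $b_i$, rather than a single fixed subset such as $\{b_i,b_i+1,\dots\}$. A single fixed subset would let the adversary aim for the other elements of the acceptance set and break the bound.
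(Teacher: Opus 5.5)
Your proposal is correct and matches the reduction the paper uses (taken from Ref.~\cite{kramer2026tight}): each equation $L_i(\xx)=b_i$ is replaced by the $\binom{q-1}{r-1}$ constraints $L_i(\xx)\in S$, one for each $r$-subset $S\ni b_i$. Your soundness bound $\beta+(1-\beta)\tfrac{r-1}{q-1}\le r/q+\delta\,\tfrac{q-r}{q-1}\le r/q+\delta$ is exactly the one quoted in the proof of \cref{thm:bounded_degree_linsat}, so taking $\delta=\varepsilon$ already suffices.
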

The hard instances produced by this reduction have arity exactly $3$, i.e.,  they are \maxeklinsat{$3$}{$q$}{$r$} instances.
Note that this inapproximability guarantee is \emph{tight} since a uniformly chosen assignment $\xx \in \mathbb{F}_q$ satisfies each constraint $i$ with probability $\Abs{F_i}/q = r/q$ and thus gives an expected approximation ratio of $r/q$.
The hard instances produced by these PCP reductions have unbounded variable degree and $n/m = o(1)$. When the degree is bounded, and hence when each variable appears in  at most $D$ constraints, the approximability landscape changes qualitatively.

\subsubsection{Bounded degree}

When the degree is bounded, the inapproximability landscape changes: the random-assignment threshold is no longer a hard wall, and algorithms can provably exceed it~\cite{Hastad2000,FGG14bounded,Barak2015}. We will review these algorithms in detail in \cref{sec:algos_bounded_degree}. 
On the hardness side, the relevant result 
on which we build here is due to Trevisan~\cite{Trevisan2001}, who has 
introduced a randomized degree-reduction technique (variable splitting, weighted replication, random sampling, and pruning) that converts arbitrary CSP instances into bounded-degree instances while incurring a $\calO(1/\sqrt{D})$ loss in the approximation gap. Applied to H\r{a}stad's optimal inapproximability of \maxeksat{$3$}~\cite{hastad2001}, this yields the following theorem.

\begin{theorem}[Bounded-degree inapproximability of 
\maxeksat{$3$}~{\cite{Trevisan2001}}]\label{thm:trevisan}
For every sufficiently large degree bound $D$, it is $\NP$-hard to approximate \maxeksat{$3$} on instances where each variable appears in at most $D$ clauses within $7/8 + \calO(1/\sqrt{D})$.
\end{theorem}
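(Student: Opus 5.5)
The plan is to compose Håstad's $(1-\varepsilon, 7/8+\varepsilon)$ gap for \maxeksat{$3$}~\cite{hastad2001} with a randomized degree reduction that preserves the value of every assignment up to an additive $\calO(1/\sqrt{D})$. Take a hard instance $\Phi$ with $n$ variables and $m$ clauses; its variable degrees may be unbounded. \textbf{Step 1 (variable splitting and weighting):} for each variable $x_j$ of degree $d_j$ in $\Phi$, create copies $x_j^{(1)},\dots,x_j^{(d_j)}$, one per occurrence. View the instance as a distribution over clauses, where a clause is chosen uniformly and then each of its literals is rerouted to a uniformly random copy of its variable. To force copies to agree, we would like consistency gadgets, but for $\maxeksat{3}$ we cannot use equality constraints directly. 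Instead, Trevisan's approach uses the following \textbf{Step 2 (random sampling):} sample $N = \Theta(D n')/3$ clauses independently from the distribution that picks a uniform clause of $\Phi$ and replaces each variable $x_j$ by a uniformly random copy among $\lceil D'\cdot (d_j/\bar d)\rceil$ copies. Here the copies per variable are proportional to its weight, so that each copy has expected degree $\approx D/2$.

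\textbf{Step 3 (pruning):} discard clauses touching copies of degree more than $D$. By a Chernoff/Poisson tail bound, the expected fraction removed is $e^{-\Omega(D)}$, which is negligible. \textbf{Completeness} is immediate: copying a near-satisfying assignment of $\Phi$ to all copies satisfies a $1-\varepsilon-o(1)$ fraction in expectation.

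\textbf{Step 4 (soundness, the main obstacle):} we must show that every assignment $\sigma$ to the new instance satisfies at most $7/8+\varepsilon+\calO(1/\sqrt{D})$ of the clauses. Fix $\sigma$ and let $p_j$ be the fraction of copies of $x_j$ set to $1$. The expected payoff of the sampling distribution against $\sigma$ is the multilinear extension of $\Phi$'s objective at $(p_j)$, and randomized rounding shows this is at most $\OPT(\Phi)/m \le 7/8+\varepsilon$. The difficulty is uniformity over all $2^{\#\text{copies}}$ assignments. A plain Chernoff bound with deviation $\delta$ gives failure probability $e^{-\delta^2 N}$ with $N \approx D\cdot\#\text{copies}/3$, so a union bound over $2^{\#\text{copies}}$ assignments requires $\delta^2 D \gtrsim \ln 2$, i.e.\ $\delta = \calO(1/\sqrt{D})$. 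This is exactly the claimed loss. I would verify the constants and then apply Markov to the pruning loss. Since the reduction is randomized, the conclusion is $\NP$-hardness under randomized reductions, which can be derandomized or stated as in~\cite{Trevisan2001}. Taking $\varepsilon$ small relative to $1/\sqrt{D}$ yields the bound $7/8+\calO(1/\sqrt{D})$.
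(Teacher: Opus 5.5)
Your proposal is correct and follows essentially the same route as the paper's account of Trevisan's reduction in the proof of \cref{thm:bounded_degree_linsat}: copies in proportion to degree, sampling a uniform clause with independently uniform copies, Hoeffding plus a union bound over all $2^{\#\text{copies}}$ assignments (which is exactly where the $\calO(1/\sqrt{D})$ loss comes from), pruning, and recovery by rounding each variable according to the fraction of its copies set to $1$. The differences are cosmetic: you discard overloaded clauses instead of replacing them and use a mean-$D/2$ Chernoff tail instead of the paper's variance/Markov excess bound; the one aside to drop is that the reduction ``can be derandomized'', which your argument does not support and which is not needed, and if you start from H\r{a}stad's perfectly complete $(1, 7/8+\varepsilon)$ gap rather than $(1-\varepsilon, 7/8+\varepsilon)$, completeness holds with probability $1$ and you get the one-sided ($\NP\subseteq\coRP$) conclusion the paper attributes to Trevisan.
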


Since Trevisan's reduction is randomized and succeeds with constant probability, 
the inapproximability with polynomial time classical algorithms holds under $\NP \not\subseteq \coRP$\footnote{Trevisan states the collapse as $\RP=\NP$; the reduction in fact gives $\NP\subseteq\coRP$ directly, since the YES case is preserved with probability $1$ (perfect completeness) and only the soundness side carries error.} rather than $\Pclass \neq \NP$.

Although stated originally for \maxeksat{$3$}, Trevisan's reduction is purely combinatorial and applies to any CSP over a finite alphabet~\cite{Trevisan2001,Trevisan2015blog}. Applying it instead to H\r{a}stad's $(1-\varepsilon,\; 1/2+\varepsilon)$-gap for \maxeklinq{$3$}{$2$} \cite{hastad2001} yields $\NP$-hardness of approximating bounded-degree \maxekxor{$3$} within $1/2 + \calO(1/\sqrt{D})$. This appears to be folklore, though it has not been made explicit in the literature.\footnote{The composition is referenced informally in Refs.~\cite{FGG14bounded,Trevisan2015blog}. It also follows as a special case of our \cref{thm:bounded_degree_linsat}.}
A central contribution of the present work is to extend this bounded-degree hardness from \maxxor{} to \maxlinqr{$q$}{$r$} over arbitrary finite fields $\mathbb{F}_q$ with acceptance sets of size $r$, which we carry out in \cref{sec:bounded_degree_maxlin}.

\section{Inapproximability of bounded-degree \maxlin} 
\label{sec:bounded_degree_maxlin}
The proof combines three ingredients:
\begin{enumerate}
    \item H\r{a}stad's $(1-\varepsilon,\; 1/q+\varepsilon)$--inapproximability for \maxeklinq{$3$}{$q$}~\cite{hastad2001}, which produces hard instances with unbounded variable degree.
    \item Trevisan's randomized degree-reduction technique~\cite{Trevisan2001}, which converts these into bounded-degree instances at the cost of a $\calO(1/\sqrt{D})$ loss in the gap.
    \item The deterministic reduction from \maxeklinsat{$3$}{$q$}{$1$} to \maxeklinsat{$3$}{$q$}{$r$} from Ref.~\cite{kramer2026tight}, which lifts the result to arbitrary acceptance-set size $r$ at the cost of a constant factor in the degree.
\end{enumerate}
Composing these yields the following result.
\begin{theorem}[Bounded-degree inapproximability of \maxeklinsat{$3$}{$q$}{$r$}]\label{thm:bounded_degree_linsat}
For every finite field $\mathbb{F}_q$, every integer $1 \leq r \leq q-1$, and every sufficiently large degree bound $D$, it is $\NP$-hard under randomized reductions to approximate \maxeklinsat{$3$}{$q$}{$r$} on instances where each variable appears in at most $D$ constraints, within
\begin{equation}
    \frac{r}{q} + \calO_{q,r}\!\left(\frac{1}{\sqrt{D}}\right).
\end{equation}
\end{theorem}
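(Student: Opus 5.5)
We compose the three ingredients listed above, in the order H\r{a}stad $\to$ Trevisan $\to$ acceptance-set lifting. We start from \cref{thm:hastad} with $\Gamma=\mathbb{F}_q$: for any $\varepsilon>0$ it is $\NP$-hard to distinguish \maxeklinq{$3$}{$q$} instances with $\OPT\ge(1-\varepsilon)m$ from those with $\OPT\le(1/q+\varepsilon)m$. These instances have unbounded degree. We then apply Trevisan's degree reduction, phrased for a general CSP with constraints of arity $3$ over alphabet $\mathbb{F}_q$, and take the lift to size-$r$ acceptance sets last.

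\textbf{Trevisan's reduction over $\mathbb{F}_q$.} First, replicate each constraint with polynomial weights so that the instance is (approximately) uniform. Next, split each variable $x_j$ of degree $d_j$ into $d_j/D'$ copies, each occurrence being assigned to a uniformly random copy. Then sample $m'=\Theta(nD)$ constraints at random and prune any copy whose degree exceeds $D$. Because equations have the form $a_1x_1+a_2x_2+a_3x_3=b$ with $a_i\neq0$, the new instance is still \maxeklinq{$3$}{$q$}.

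Completeness is preserved with probability $1$: giving every copy the value of its original variable keeps all satisfied constraints, and pruning removes only an $o(1)$ fraction. Soundness is the main obstacle. Fix any assignment to the copies and, for each original variable, consider the empirical distribution of its copies' values. We need to bound, uniformly over all $q^{n'}$ assignments, the fraction of satisfied sampled constraints by the value of a randomized original assignment plus $\calO_q(1/\sqrt{D})$. Following Trevisan, we use a Chernoff bound for a fixed assignment, which gives deviation $\exp(-\Omega(\delta^2 m'))$, and a union bound over $q^{n'}=q^{m'\cdot 3/D}$ assignments. This forces $\delta^2 \gtrsim (3\ln q)/D$, i.e.\ $\delta=\calO(\sqrt{\log q/D})$. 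The randomized original assignment, in which each variable independently draws the value of a random copy, has expected value at most $(1/q+\varepsilon)m$ by soundness, and the sampled instance concentrates around its expectation. Combining these gives $\OPT'\le(1/q+\varepsilon+\calO_q(1/\sqrt{D}))m'$ with constant probability. I expect the hardest bookkeeping to be checking that the splitting step (where each constraint picks random copies) makes the expected value of a copy-assignment equal to that of the product-randomized original assignment. Taking $\varepsilon\sim1/\sqrt D$ absorbs $\varepsilon$. This yields an $\NP$-hardness result under a one-sided randomized reduction.

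\textbf{Lifting to $r$.} We apply the deterministic gadget from Ref.~\cite{kramer2026tight}, which maps \maxeklinsat{$3$}{$q$}{$1$} to \maxeklinsat{$3$}{$q$}{$r$}. I would check two properties. First, each source constraint produces $\calO_{q,r}(1)$ output constraints, so degrees grow by at most a factor $c_{q,r}$; we therefore run the previous step with $D/c_{q,r}$. Second, the gadget maps values affinely: a source fraction $1-\varepsilon$ maps to $1-\calO(\varepsilon)$, and a source fraction $1/q+\delta$ maps to at most $r/q+\calO_{q,r}(\delta)$ (as in that paper's soundness analysis, which is linear in the excess over $1/q$). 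Substituting $\delta=\calO_q(1/\sqrt{D})$ and rescaling $D$ gives the threshold $r/q+\calO_{q,r}(1/\sqrt D)$, while the YES value stays near $1$. Hence approximating beyond this threshold is $\NP$-hard under randomized reductions.
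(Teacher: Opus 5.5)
Your route is the paper's: H\r{a}stad's $(1-\varepsilon,1/q+\varepsilon)$ gap for \maxeklinq{$3$}{$q$}, then Trevisan's copy/sample/prune reduction, then the lift of Ref.~\cite{kramer2026tight}. The Trevisan step uses Hoeffding plus a union bound over $q^{N}$ copy assignments, forcing $\varepsilon=\Theta(\sqrt{\ln q/D})$, and recovers an original assignment by letting each $x_j$ take the value of a uniformly random copy. The lift multiplies degrees by $\binom{q-1}{r-1}$ and maps $1/q+\delta$ to $r/q+\delta(q-r)/(q-1)$.

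Your splitting variant ($d_j/D'$ copies, a random assignment of occurrences, then resampling) also works. Read literally, though, it needs concentration twice: once over the split and once over the sample. The paper uses one copy per occurrence, the all-combinations instance $\phi_w$ with weights $1/(o_ao_bo_c)$, and sampling proportional to weight. This makes the identity $\Pr[x_a+x_b+x_c=b]=W$ exact, so a single Hoeffding/union-bound step suffices. You also leave the soundness cost of pruning unquantified. The paper bounds it by $\calO(1/\tilde D)$ via the excess inequality and Markov, which requires the expected copy degree to sit a constant factor below the cap.

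The one actual error is your completeness claim. H\r{a}stad's gap for linear equations has imperfect completeness, and necessarily so, since satisfiable linear systems are solvable by Gaussian elimination. So the all-copies-equal assignment satisfies each sampled constraint only with probability at least $1-\varepsilon$. Its empirical value is a binomial average that concentrates around $1-\varepsilon$, but it is not at least $1-\varepsilon$ with probability $1$. Completeness therefore needs its own Hoeffding bound (for a single fixed assignment, so no union bound), and the reduction errs on both sides. The correct conclusion is hardness under two-sided randomized reductions (assuming $\NP\not\subseteq\BPP$), not the one-sided, $\coRP$-type statement you claim; the paper's remark on imperfect completeness explains exactly this. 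Since the theorem only asserts hardness ``under randomized reductions,'' it still follows once you drop the one-sidedness claim and add the completeness concentration step.
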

\begin{proof}
We compose two reductions. The first reduction adapts Trevisan's randomized degree-re\-duc\-tion technique~\cite{Trevisan2001}---stated for \maxeksat{$3$} but applicable to any CSP over a finite alphabet~\cite{Trevisan2001,Trevisan2015blog}---to convert H\r{a}stad's unbounded-degree hard instances of \maxeklinq{$3$}{$q$}~\cite{hastad2001}, i.e.,  \maxeklinsat{$3$}{$q$}{$1$}, into bounded-degree instances. 
The second reduction applies the deterministic lifting from \maxeklinsat{$3$}{$q$}{$1$} to \maxeklinsat{$3$}{$q$}{$r$} from Ref.~\cite{kramer2026tight}; since this multiplies the degree by $\binom{q-1}{r-1}$, we choose the initial degree bound to $\tilde{D} = \lfloor D / \binom{q-1}{r-1} \rfloor$ so that the final degree after lifting is at most $D$.

\begin{figure}[ht]
\begin{tikzpicture}[xscale=0.87,
  font=\tiny,
  >={Stealth[length=2mm, width=1.5mm]},
  bigvar1/.style={circle, draw=DarkOrchid!75, fill=DarkOrchid!32, line width=0.5pt,
    minimum size=6mm, inner sep=0pt, text=DarkOrchid!95!black,  font=\small},
  bigvar2/.style={circle, draw=blue!70,   fill=blue!21,    line width=0.5pt,
    minimum size=6mm, inner sep=0pt, text=blue!90,          font=\small},
  bigvar3/.style={circle, draw=teal!80,  fill=teal!30,   line width=0.5pt,
    minimum size=6mm, inner sep=0pt, text=teal!90!black,   font=\small},
  bigvar4/.style={circle, draw=Chocolate!80,  fill=Chocolate!22,   line width=0.5pt,
    minimum size=6mm, inner sep=0pt, text=IndianRed!95!black,   font=\small},
  bigvar5/.style={circle, draw=DodgerBlue,    fill=DodgerBlue!40,     line width=0.5pt,
    minimum size=6mm, inner sep=0pt, text=MidnightBlue!95!black,     font=\small},
  bigvar6/.style={circle, draw=purple!80, fill=purple!22,  line width=0.5pt,
    minimum size=6mm, inner sep=0pt, text=purple!95!black,  font=\small},
  cpvar1/.style={circle, draw=DarkOrchid!75, fill=DarkOrchid!32, line width=0.4pt,
    minimum size=4.5mm, inner sep=0pt, font=\tiny, text=DarkOrchid!95!black},
  cpvar2/.style={circle, draw=blue!70,   fill=blue!21,    line width=0.4pt,
    minimum size=4.5mm, inner sep=0pt, font=\tiny, text=blue!90},
  cpvar3/.style={circle, draw=teal!80,  fill=teal!30,   line width=0.4pt,
    minimum size=4.5mm, inner sep=0pt, font=\tiny, text=teal!90!black},
  cpvar4/.style={circle, draw=Chocolate!80,  fill=Chocolate!22,   line width=0.4pt,
    minimum size=4.5mm, inner sep=0pt, font=\tiny, text=IndianRed!95!black},
  cpvar5/.style={circle, draw=DodgerBlue,    fill=DodgerBlue!40,     line width=0.4pt,
    minimum size=4.5mm, inner sep=0pt, font=\tiny, text=MidnightBlue!95!black},
  cpvar6/.style={circle, draw=purple!80, fill=purple!22,  line width=0.4pt,
    minimum size=4.5mm, inner sep=0pt, font=\tiny, text=purple!95!black},
  frvar/.style={circle, draw=black!55, fill=none, line width=0.4pt, dashed,
    minimum size=5mm, inner sep=0pt, font=\tiny},
  bigcon/.style={rectangle, rounded corners=0.5pt, draw=gray!70, fill=LightSteelBlue!50,
    line width=0.5pt, minimum size=5mm, inner sep=1pt, text=gray!85, font=\scriptsize},
  cpcon/.style={rectangle, rounded corners=0.3pt, draw=gray!70, fill=LightSteelBlue!50,
    line width=0.3pt, minimum size=2.5mm, inner sep=0pt},
  cpconH/.style={rectangle, rounded corners=0.3pt, draw=black!35, fill=none,
    line width=0.25pt, minimum size=2.5mm, inner sep=0pt},
  cpconP/.style={rectangle, rounded corners=0.3pt, draw=orange!85, fill=orange!28,
    line width=0.4pt, minimum size=3mm, inner sep=0pt},
  cpconR/.style={rectangle, rounded corners=0.3pt, draw=black!55, fill=none,
    line width=0.4pt, dashed, minimum size=3mm, inner sep=0pt},
  edg/.style={line width=0.3pt, black!55},
  edgD/.style={line width=0.3pt, black!50, dashed},
  panel/.style={rectangle, rounded corners=2pt, draw=black!25, line width=0.4pt},
  ptitle/.style={font=\large},
  psub/.style={font=\scriptsize, text=black!65},
  psubW/.style={font=\scriptsize, text=black!65, fill=white, inner sep=1pt},
  arlbl/.style={font=\scriptsize, text=black!75},
]

\draw[panel] (0,0.8) rectangle (2.45,8.9);
\node[ptitle] at (1.25,9.5) {$\phi$};
\node[psub] at (1.25,9.15) {$n=6,\ m=5$};

\node[bigvar1] (xA1) at (0.55,7.8) {$x_1$};
\node[bigvar2] (xA2) at (0.55,6.8) {$x_2$};
\node[bigvar3] (xA3) at (0.55,5.8) {$x_3$};
\node[bigvar4] (xA4) at (0.55,4.2) {$x_4$};
\node[bigvar5] (xA5) at (0.55,3.2) {$x_5$};
\node[bigvar6] (xA6) at (0.55,2.2) {$x_6$};

\begin{scope}[shift={(0,0.45)}]
\node[bigcon] (CA1) at (2.0,7.5) {$C_1$};
\node[bigcon] (CA2) at (2.0,6.0) {$C_2$};
\node[bigcon] (CA3) at (2.0,4.5) {$C_3$};
\node[bigcon] (CA4) at (2.0,3.0) {$C_4$};
\node[bigcon] (CA5) at (2.0,1.5) {$C_5$};
\end{scope}

\draw[edg] (xA1) -- (CA1); \draw[edg] (xA2) -- (CA1); \draw[edg] (xA3) -- (CA1);
\draw[edg] (xA1) -- (CA2); \draw[edg] (xA2) -- (CA2); \draw[edg] (xA4) -- (CA2);
\draw[edg] (xA1) -- (CA3); \draw[edg] (xA3) -- (CA3); \draw[edg] (xA5) -- (CA3);
\draw[edg] (xA2) -- (CA4); \draw[edg] (xA3) -- (CA4); \draw[edg] (xA6) -- (CA4);
\draw[edg] (xA4) -- (CA5); \draw[edg] (xA5) -- (CA5); \draw[edg] (xA6) -- (CA5);

\draw[->] (2.65,4.5) -- (3.4,4.5);
\node[arlbl] at (3.05,5.0) {split \&};
\node[arlbl] at (3.05,4.75) {expand};

\draw[panel] (3.65,0.8) rectangle (8.3,8.9);
\node[ptitle] at (6.0,9.5) {$\phi_w$};
\node[psub] at (6.0,9.15) {$89$ constraint expansions};

\node[cpvar1] (yB1_1) at (4.0,8.4)  {$y_1^1$};
\node[cpvar1] (yB1_2) at (4.0,7.95) {$y_1^2$};
\node[cpvar1] (yB1_3) at (4.0,7.5)  {$y_1^3$};
\node[cpvar2] (yB2_1) at (4.0,6.9)  {$y_2^1$};
\node[cpvar2] (yB2_2) at (4.0,6.45) {$y_2^2$};
\node[cpvar2] (yB2_3) at (4.0,6.0)  {$y_2^3$};
\node[cpvar3] (yB3_1) at (4.0,5.4)  {$y_3^1$};
\node[cpvar3] (yB3_2) at (4.0,4.95) {$y_3^2$};
\node[cpvar3] (yB3_3) at (4.0,4.5)  {$y_3^3$};
\node[cpvar4] (yB4_1) at (4.0,3.9)  {$y_4^1$};
\node[cpvar4] (yB4_2) at (4.0,3.45) {$y_4^2$};
\node[cpvar5] (yB5_1) at (4.0,2.85) {$y_5^1$};
\node[cpvar5] (yB5_2) at (4.0,2.4)  {$y_5^2$};
\node[cpvar6] (yB6_1) at (4.0,1.8)  {$y_6^1$};
\node[cpvar6] (yB6_2) at (4.0,1.35) {$y_6^2$};

\foreach \i in {0,...,8} {\foreach \j in {0,1,2} {\node[cpconH] at ({5.3+0.3*\i},{8.4-0.3*\j}) {};}}
\foreach \i in {0,...,5} {\foreach \j in {0,1,2} {\node[cpconH] at ({5.3+0.3*\i},{6.9-0.3*\j}) {};}}
\foreach \i in {0,...,5} {\foreach \j in {0,1,2} {\node[cpconH] at ({5.3+0.3*\i},{5.4-0.3*\j}) {};}}
\foreach \i in {0,...,5} {\foreach \j in {0,1,2} {\node[cpconH] at ({5.3+0.3*\i},{3.9-0.3*\j}) {};}}
\foreach \i in {0,...,3} {\foreach \j in {0,1}   {\node[cpconH] at ({5.3+0.3*\i},{2.4-0.3*\j}) {};}}

\node[cpcon] (Bs1) at (5.3, 8.4) {};
\node[cpcon] (Bs2) at (6.2, 8.1) {};
\node[cpcon] (Bs3) at (7.4, 7.8) {};
\node[cpcon] (Bs4) at (5.6, 6.9) {};
\node[cpcon] (Bs5) at (6.5, 6.3) {};
\node[cpcon] (Bs6) at (6.2, 5.1) {};
\node[cpcon] (Bs7) at (5.9, 3.6) {};

\draw[edg] (yB1_1)--(Bs1); \draw[edg] (yB2_1)--(Bs1); \draw[edg] (yB3_1)--(Bs1);
\draw[edg] (yB1_1)--(Bs2); \draw[edg] (yB2_2)--(Bs2); \draw[edg] (yB3_2)--(Bs2);
\draw[edg] (yB1_2)--(Bs3); \draw[edg] (yB2_3)--(Bs3); \draw[edg] (yB3_3)--(Bs3);
\draw[edg] (yB1_1)--(Bs4); \draw[edg] (yB2_1)--(Bs4); \draw[edg] (yB4_1)--(Bs4);
\draw[edg] (yB1_3)--(Bs5); \draw[edg] (yB2_2)--(Bs5); \draw[edg] (yB4_2)--(Bs5);
\draw[edg] (yB1_1)--(Bs6); \draw[edg] (yB3_3)--(Bs6); \draw[edg] (yB5_1)--(Bs6);
\draw[edg] (yB2_3)--(Bs7); \draw[edg] (yB3_2)--(Bs7); \draw[edg] (yB6_1)--(Bs7);

\begin{scope}[shift={(0,-0.05)}]
\node[psubW] at (6.5, 7.5)  {$C_1$ (27 expansions)};
\node[psubW] at (6.5, 6.0) {$C_2$ (18 expansions)};
\node[psubW] at (6.5, 4.5) {$C_3$ (18 expansions)};
\node[psubW] at (6.5, 3.0) {$C_4$ (18 expansions)};
\node[psubW] at (6.5, 1.8) {$C_5$ (8 expansions)};
\end{scope}

\draw[->] (8.5,4.5) -- (9.25,4.5);
\node[arlbl] at (8.9,4.75) {sample};

\draw[panel] (9.5,0.8) rectangle (12.5,8.9);
\node[ptitle] at (11.0,9.5) {$\phi_R$};
\node[psub] at (11.0,9.15) {$7$ samples, pre-prune};

\node[cpvar1] (yC1_1) at (10.0,8.4)  {$y_1^1$};
\node[cpvar1] (yC1_2) at (10.0,7.95) {$y_1^2$};
\node[cpvar1] (yC1_3) at (10.0,7.5)  {$y_1^3$};
\node[cpvar2] (yC2_1) at (10.0,6.9)  {$y_2^1$};
\node[cpvar2] (yC2_2) at (10.0,6.45) {$y_2^2$};
\node[cpvar2] (yC2_3) at (10.0,6.0)  {$y_2^3$};
\node[cpvar3] (yC3_1) at (10.0,5.4)  {$y_3^1$};
\node[cpvar3] (yC3_2) at (10.0,4.95) {$y_3^2$};
\node[cpvar3] (yC3_3) at (10.0,4.5)  {$y_3^3$};
\node[cpvar4] (yC4_1) at (10.0,3.9)  {$y_4^1$};
\node[cpvar4] (yC4_2) at (10.0,3.45) {$y_4^2$};
\node[cpvar5] (yC5_1) at (10.0,2.85) {$y_5^1$};
\node[cpvar5] (yC5_2) at (10.0,2.4)  {$y_5^2$};
\node[cpvar6] (yC6_1) at (10.0,1.8)  {$y_6^1$};
\node[cpvar6] (yC6_2) at (10.0,1.35) {$y_6^2$};

\node[cpconP] (Cc1) at (12.0,8.0) {};
\node[cpcon]  (Cc2) at (12.0,7.1) {};
\node[cpcon]  (Cc3) at (12.0,6.2) {};
\node[cpcon]  (Cc4) at (12.0,5.3) {};
\node[cpcon]  (Cc5) at (12.0,4.4) {};
\node[cpcon] (Cc6) at (12.0,3.5) {};
\node[cpcon]  (Cc7) at (12.0,2.6) {};

\draw[edg] (yC1_1)--(Cc1); \draw[edg] (yC2_1)--(Cc1); \draw[edg] (yC3_1)--(Cc1);
\draw[edg] (yC1_1)--(Cc2); \draw[edg] (yC2_2)--(Cc2); \draw[edg] (yC3_2)--(Cc2);
\draw[edg] (yC1_2)--(Cc3); \draw[edg] (yC2_3)--(Cc3); \draw[edg] (yC3_3)--(Cc3);
\draw[edg] (yC1_1)--(Cc4); \draw[edg] (yC2_1)--(Cc4); \draw[edg] (yC4_1)--(Cc4);
\draw[edg] (yC1_3)--(Cc5); \draw[edg] (yC2_2)--(Cc5); \draw[edg] (yC4_2)--(Cc5);
\draw[edg] (yC1_1)--(Cc6); \draw[edg] (yC3_3)--(Cc6); \draw[edg] (yC5_1)--(Cc6);
\draw[edg] (yC2_3)--(Cc7); \draw[edg] (yC3_2)--(Cc7); \draw[edg] (yC6_1)--(Cc7);

\draw[->] (12.65,4.5) -- (13.35,4.5);
\node[arlbl] at (13.0,4.75) {prune};

\draw[panel] (13.5,0.8) rectangle (17.0,8.9);
\node[ptitle] at (15.25,9.5) {$\phi_{\tilde D}$};
\node[psub] at (15.25,9.15) {degree $\le \tilde D = 3$};

\node[cpvar1] (yD1_1) at (14.0,8.4)  {$y_1^1$};
\node[cpvar1] (yD1_2) at (14.0,7.95) {$y_1^2$};
\node[cpvar1] (yD1_3) at (14.0,7.5)  {$y_1^3$};
\node[cpvar2] (yD2_1) at (14.0,6.9)  {$y_2^1$};
\node[cpvar2] (yD2_2) at (14.0,6.45) {$y_2^2$};
\node[cpvar2] (yD2_3) at (14.0,6.0)  {$y_2^3$};
\node[cpvar3] (yD3_1) at (14.0,5.4)  {$y_3^1$};
\node[cpvar3] (yD3_2) at (14.0,4.95) {$y_3^2$};
\node[cpvar3] (yD3_3) at (14.0,4.5)  {$y_3^3$};
\node[cpvar4] (yD4_1) at (14.0,3.9)  {$y_4^1$};
\node[cpvar4] (yD4_2) at (14.0,3.45) {$y_4^2$};
\node[cpvar5] (yD5_1) at (14.0,2.85) {$y_5^1$};
\node[cpvar5] (yD5_2) at (14.0,2.4)  {$y_5^2$};
\node[cpvar6] (yD6_1) at (14.0,1.8)  {$y_6^1$};
\node[cpvar6] (yD6_2) at (14.0,1.35) {$y_6^2$};

\node[cpconP] (Cd1) at (16.0,8.0) {};
\node[cpcon]  (Cd2) at (16.0,7.1) {};
\node[cpcon]  (Cd3) at (16.0,6.2) {};
\node[cpcon]  (Cd4) at (16.0,5.3) {};
\node[cpcon]  (Cd5) at (16.0,4.4) {};
\node[cpcon] (Cd6) at (16.0,3.5) {};
\node[cpcon]  (Cd7) at (16.0,2.6) {};

\node[frvar] (ys1) at (16.7,8.5) {$y_*^1$};
\node[frvar] (ys2) at (16.7,8.0) {$y_*^2$};
\node[frvar] (ys3) at (16.7,7.5) {$y_*^3$};


\draw[edg] (yD1_1)--(Cd2); \draw[edg] (yD2_2)--(Cd2); \draw[edg] (yD3_2)--(Cd2);
\draw[edg] (yD1_2)--(Cd3); \draw[edg] (yD2_3)--(Cd3); \draw[edg] (yD3_3)--(Cd3);
\draw[edg] (yD1_1)--(Cd4); \draw[edg] (yD2_1)--(Cd4); \draw[edg] (yD4_1)--(Cd4);
\draw[edg] (yD1_3)--(Cd5); \draw[edg] (yD2_2)--(Cd5); \draw[edg] (yD4_2)--(Cd5);
\draw[edg] (yD1_1)--(Cd6); \draw[edg] (yD3_3)--(Cd6); \draw[edg] (yD5_1)--(Cd6);
\draw[edg] (yD2_3)--(Cd7); \draw[edg] (yD3_2)--(Cd7); \draw[edg] (yD6_1)--(Cd7);
\draw[edgD] (Cd1)--(ys1); \draw[edgD] (Cd1)--(ys2); \draw[edgD] (Cd1)--(ys3);

\end{tikzpicture}

\caption{Degree reduction $\phi \to \phi_w \to \phi_R \to \phi_{\tilde D}$ on an example with $n = 6$, $m = 5$, arity $k = 3$, with variable degree vector $(o_1,\ldots,o_6) = (3,3,3,2,2,2)$ and degree bound $\tilde D = 3$. Each panel transition represents a larger step of the construction. The first panel visualizes the input instance as a graph, showing which variables belong to which constraints.
In the second panel, the weighted instance $\phi_w$ is shown: Each variable $x_i$ is copied $o_i$ times and each original constraint is expanded to a \emph{constraint group}, such that each combination of copy-variables relating to the corresponding original constraint is covered by one element of the group. These group elements are also each weighted in inverse proportion to their group's size.  
The third panel introduces the probabilistic step, where $\tilde{D}N/e^2$ constraints (rounded up) are chosen randomly in proportion to their weight.
In the last panel, constraints are iteratively pruned as long as they contain variables that have more than $\tilde D$ appearances (marked orange). This is done by replacing them with trivial new constraints on new variables $y_*^j$, thus restoring the degree cap.}
\label{fig:degree-reduction-color}
\end{figure}

\noindent\textit{Degree reduction.} 
We give the argument in full for arbitrary $q$, since the generalization from Boolean to $\mathbb{F}_q$ alphabets has not previously appeared in the literature; the key modification is in the union bound (which ranges over $q^N$ rather than $2^N$ assignments) and the recovery step (which uses $q$-ary plurality rather than majority decoding).
Let $\phi$ be a hard instance of \maxeklinq{$3$}{$q$} with $n$ variables over $\mathbb{F}_q$, $m$ constraints, and gap $(1-\varepsilon,\; 1/q+\varepsilon)$ as produced by H\r{a}stad's PCP reduction~\cite{hastad2001}. Let $o_i$ denote the number of constraints containing $x_i$ and set $N = \sum_i o_i = 3m$. Following Trevisan~\cite{Trevisan2001}, we construct a degree-$\tilde{D}$ instance $\phi_{\tilde{D}}$ by: (i)~introducing, for each variable $x_i$, a set of $o_i$ copy-variables $y_i^1, \ldots, y_i^{o_i} \in \mathbb{F}_q$; 
(ii)~building a weighted instance $\phi_w$: for each original constraint involving variables $\{x_a, x_b, x_c\}$, we create all
constraints obtained by substituting the original variables with their respective copies, i.e.,  we build each new constraint with $\{\{y_a^{j_1}, y_b^{j_2}, y_c^{j_3}\} \mid(j_1,j_2,j_3) \in [o_a] \times [o_b] \times [o_c]\}$ (carrying over their values); each of those new constraints is then weighted by $1/(o_a o_b o_c)$;
(iii)~sampling $(\tilde{D}/e^2) \cdot N$ constraints from $\phi_w$ proportionally to weight, yielding $\phi_R$; and (iv)~pruning variables with more than $\tilde{D}$ occurrences by replacing excess constraints with trivially satisfiable ones on fresh variables, yielding $\phi_{\tilde{D}}$. An example of this construction is shown in \Cref{fig:degree-reduction-color}.

Completeness of the reduction is immediate: any assignment satisfying a $(1-\varepsilon)$--fraction of $\phi$ extends to $\phi_{\tilde{D}}$ by setting all copies equal.
We analyze soundness in two steps: First, we consider only $\phi_R$, where concentration applies, then we account for the additional error introduced by pruning in the transition from $\phi_R$ to $\phi_{\tilde{D}}$.
To begin, note that the sum of all weights in $\phi_w$ is $m$.
We fix an assignment $\mathbf{a}$ to the copy-variables and let
$\rho(\mathbf{a}) \in [0,1]$ be the sum of weights of all constraints in $\phi_w$ that are satisfied by $\mathbf{a}$, divided by $m$.
Each of the constraints in $\phi_R$ is drawn independently and proportionally to its weight so the indicator of whether $\mathbf{a}$ satisfies any one is i.i.d.\@ $\mathrm{Bernoulli}(\rho(\mathbf{a}))$. Applying Hoeff\-ding's inequality to the total number $S$ of those constraints satisfied by $\mathbf{a}$, we get
\begin{align}
\Pr\left[\frac{S}{\tilde{D}N/e^2} \ge \rho(\mathbf{a}) + \varepsilon\right]
\le \exp(-2\varepsilon^2 
\tilde{D}N/e^2).
\end{align}
Using the union bound over all possible assignments of the $N$ copy-variables, we get 
\begin{align}
    q^N \cdot\,\exp(-2\varepsilon^2 \tilde{D}N/e^2) < 1 \Longleftrightarrow 
\varepsilon > e \sqrt{(\ln q) / (2\tilde{D})},
\end{align}
so setting $\varepsilon = \Theta_q(1/\sqrt{\tilde{D}})$
ensures that simultaneously for all assignments $\mathbf{a}$, no assignment satisfies more than a
$\rho(\mathbf{a}) + \varepsilon$ fraction of the constraints in $\phi_R$ (with high probability).
We next argue that no dominant additional error is introduced in the pruning step.
For each copy-variable $y$, let $X_y$ count the constraints of $\phi_R$ that contain $y$; as each sample contains $y$ with probability $1/m$, we have $X_y \sim \mathrm{Binomial}(\tilde{D}N/e^2, 1/m)$ with mean $\mu = 3\tilde D/e^2$ (using $N=3m$) and variance $\sigma^2 \le \mu$. 
Since 
$\max\{X_y - \tilde{D}, 0\}$, called \emph{excess} of $X_y$, can be at most $\frac{(X_y - \mu)^2}{\tilde D - \mu}$ algebraically\footnote{For positive numbers $a=X_y - \tilde{D}>0$ and $b=\tilde D - \mu>0$: $(a+b)^2/b = a^2/b + 2a + b \geq a$.}, 
the expected excess is $\sigma^2/(\tilde D - \mu) \le \frac{1}{(e^2/3-1)}$.
Summing over all copy-variables, the expected number $R$ of pruned constraints satisfies $\mathbb{E}[R] \leq N/(\frac{e^2}{3}-1)$. By Markov's inequality, $R \le 4N/(\frac{e^2}{3} - 1)$ with probability at least $3/4$. The constraint replacements then contribute at most $\frac{R}{N\tilde{D}/e^2} = (4e^2/(\frac{e^2}{3}-1))/\tilde D = \calO(1/\tilde D)$ to the satisfaction fraction.

To recover an assignment $\mathbf{x}(\mathbf{a}) \in \mathbb{F}_q^n$ from $\mathbf{a}$, set $x_i = v$ with probability $|\{j : y_i^j = v\}|/o_i$. For any constraint $x_a + x_b + x_c = b$, the satisfaction probability under this recovery equals exactly the weighted fraction $W$ of its expansions satisfied by $\mathbf{a}$ in $\phi_w$:
\begin{equation}
    \Pr[x_a + x_b + x_c = b]
    = \sum_{\substack{v_1+v_2+v_3 = b}}
      \frac{|\{j : y_a^j = v_1\}|}{o_a} \cdot
      \frac{|\{j : y_b^j = v_2\}|}{o_b} \cdot
      \frac{|\{j : y_c^j = v_3\}|}{o_c}
    = W.
\end{equation}
This identity follows from the independence of the induced product distribution and holds for linear constraints over any finite Abelian group. Summing over all $m$ constraints, the expected satisfaction is $m \cdot \rho(\mathbf{a})$, so an assignment achieving at least this exists. The degree reduction thus yields \maxeklinq{$3$}{$q$} instances of degree at most $\tilde{D}$ with gap $(1-\varepsilon,\; 1/q + \calO_q (1/\sqrt{\tilde{D}} ) )$.

\noindent\textit{Lifting to \maxeklinsat{$3$}{$q$}{$r$}.}
The reduction from \maxeklinsat{$3$}{$q$}{$1$} to \maxeklinsat{$3$}{$q$}{$r$} in Ref.~\cite{kramer2026tight} replaces each constraint by $\binom{q-1}{r-1}$ constraints with $r$-element acceptance sets, multiplying the degree by $\binom{q-1}{r-1}$. With $\tilde{D} = \lfloor D/\binom{q-1}{r-1} \rfloor$, the output has degree at most $D$. The soundness maps $1/q + \delta$ to $r/q + \delta \cdot (q-r)/(q-1) \leq r/q + \delta$. Substituting $\delta = \calO_q(1/\sqrt{\tilde{D}}) = \calO_{q,r}(1/\sqrt{D})$ gives \maxeklinsat{$3$}{$q$}{$r$} instances of degree at most $D$ with gap $(1-\varepsilon,\; r/q + \calO_{q,r}(1/\sqrt{D}))$.
\end{proof}

\begin{corollary}[Bounded-degree inapproximability of \maxlinqr{$q$}{$r$}]\label{cor:bounded_degree_general}
For every finite field $\mathbb{F}_q$, every integer $1 \leq r \leq q-1$, and every
sufficiently large degree bound $D$, it is $\NP$-hard under randomized reductions to
approximate \maxlinqr{$q$}{$r$} on instances of degree at most $D$ within
$r/q + \calO_{q,r}(1/\sqrt{D})$.
\end{corollary}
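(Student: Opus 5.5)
The plan is to obtain the corollary as a direct specialization of \cref{thm:bounded_degree_linsat}, since \maxeklinsat{$3$}{$q$}{$r$} is a subproblem of \maxlinqr{$q$}{$r$}. By \cref{def:max-linsat-qr}, every instance of \maxeklinsat{$3$}{$q$}{$r$} with degree at most $D$ is also a valid \maxlinqr{$q$}{$r$} instance with degree at most $D$. The reason is that the definition of \maxlinqr{$q$}{$r$} only requires $|F_i| = r$ for every $i$ and places no constraint on arity. The corollary therefore amounts to observing that hardness of a restricted input class transfers to the larger class containing it.

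Concretely, I would argue by contradiction. Suppose there were a polynomial-time (randomized) algorithm $\mathcal{A}$ that approximates \maxlinqr{$q$}{$r$} on all degree-$\le D$ instances within a ratio strictly better than $r/q + C_{q,r}/\sqrt{D}$. Here $C_{q,r}$ is the constant implicit in \cref{thm:bounded_degree_linsat}. Feed $\mathcal{A}$ the instances produced by the randomized reduction in the proof of \cref{thm:bounded_degree_linsat}. These instances have arity exactly $3$, $r$-element acceptance sets, and degree at most $D$, so they lie in the domain of $\mathcal{A}$.

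On such instances, $\mathcal{A}$ would distinguish the YES case ($\OPT \ge (1-\varepsilon)m$) from the NO case ($\OPT \le (r/q + \calO_{q,r}(1/\sqrt{D}))m$). That would decide the $\NP$-hard gap problem of \cref{thm:hastad} under randomized reductions, contradicting the assumption $\NP \not\subseteq \coRP$ in the same sense as the theorem. The qualifier ``for every sufficiently large $D$'' and the dependence of the hidden constant on $(q,r)$ carry over verbatim.

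I expect no real obstacle here. The only point to check is the bookkeeping: the degree bound and the acceptance-set size $r$ of the hard instances must be exactly those required by \cref{def:max-linsat-qr}. The lifting step ensures this, because it produces sets of size exactly $r$ and chooses $\tilde{D}$ so that the final degree is at most $D$.
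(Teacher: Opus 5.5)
Your proposal is correct and matches the paper's proof, which simply observes that \maxeklinsat{$3$}{$q$}{$r$} is the arity-exactly-$3$ restriction of \maxlinqr{$q$}{$r$}, so the hardness of \cref{thm:bounded_degree_linsat} transfers immediately. One small correction: H\r{a}stad's source gap for linear equations has imperfect completeness, so the subsampling step can err on both sides. The contradiction is therefore with $\NP \not\subseteq \BPP$, as the paper's remark on imperfect completeness explains, not with $\NP \not\subseteq \coRP$.
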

\begin{proof}
\maxeklinsat{$3$}{$q$}{$r$} is the arity-exactly-$3$ restriction of \maxlinqr{$q$}{$r$}, so
the hardness of \Cref{thm:bounded_degree_linsat} transfers immediately.
\end{proof}

\begin{corollary}[Hardness for higher arity]\label{cor:arity}
For every finite field $\mathbb{F}_q$, every $1 \leq r \leq q-1$, every $k \geq 3$, and every sufficiently large degree bound $D$, it is $\NP$-hard under randomized reductions to approximate \maxeklinsat{$k$}{$q$}{$r$}
on instances of degree at most $D$ within $r/q + \calO_{q,r}(1/\sqrt{D})$.
\end{corollary}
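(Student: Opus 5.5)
The plan is to rerun the proof of \Cref{thm:bounded_degree_linsat} with an arity-$k$ starting point in place of H\r{a}stad's arity-$3$ instances. The input is \Cref{rem:arity-k}, which gives $(1-\varepsilon,\,1/q+\varepsilon)$-hardness of \maxeklinq{$k$}{$q$} for every $k\ge 3$ via Chan's theorem. The two downstream steps are the same as before: degree reduction, then lifting from $r=1$ to general $r$. We need to check that each step is insensitive to arity, with the dependence on $k$ absorbed into the constant.

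\textbf{Degree reduction.} Given a hard instance $\phi$ of \maxeklinq{$k$}{$q$}, let $o_i$ be the degree of $x_i$ and set $N=\sum_i o_i = km$.
\begin{itemize}
\item Build $\phi_w$ by expanding each constraint on $\{x_{a_1},\dots,x_{a_k}\}$ into all $o_{a_1}\cdots o_{a_k}$ copy-substituted constraints. Each copy carries the original coefficients and right-hand side and has weight $1/\prod_\ell o_{a_\ell}$.
\item Sample $\tilde D N/e^2$ constraints proportionally to weight, then prune any copy-variable whose degree exceeds $\tilde D$.
\item \emph{Completeness}: set all copies of $x_i$ equal to $x_i$.
\item \emph{Concentration}: the Hoeffding-plus-union-bound argument over $q^N$ assignments is unchanged and gives an additive error $e\sqrt{\ln q/(2\tilde D)}$, independent of $k$.
\item \emph{Pruning}: each sample contains a fixed copy-variable with probability $1/m$, so $X_y\sim\mathrm{Binomial}(\tilde DN/e^2,1/m)$ with mean $\mu=k\tilde D/e^2$. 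This is where arity enters. For the excess bound we need $\tilde D>\mu$, so the sample size must be rescaled to $c\,\tilde D N$ with $c<1/k$, for instance $c=1/(e^2k)$. The expected excess is then $O_k(1)$ per variable, and the pruning loss is $O_k(1/\tilde D)$.
\item \emph{Recovery}: the plurality-by-frequency randomized recovery works for arity $k$. The satisfaction probability of $\sum_\ell B_{i,a_\ell}x_{a_\ell}=b_i$ under the product distribution is exactly the weighted fraction of satisfied expansions, by the same sum over $(v_1,\dots,v_k)$.
\end{itemize}
The outcome is a gap of $(1-\varepsilon,\;1/q+\calO_{q,k}(1/\sqrt{\tilde D}))$ for \maxeklinq{$k$}{$q$} at degree $\tilde D$.

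\textbf{Lifting to general $r$.} Apply the reduction from Ref.~\cite{kramer2026tight}, which replaces each equation by $\binom{q-1}{r-1}$ constraints with $r$-element acceptance sets on the same linear form. Constraints coming from an arity-$k$ form keep arity $k$. The construction does not depend on arity: it acts only on the acceptance set of a single linear form $L_i(\xx)$. So the degree is multiplied by $\binom{q-1}{r-1}$, and soundness maps $1/q+\delta$ to at most $r/q+\delta$, as before. Choosing $\tilde D=\lfloor D/\binom{q-1}{r-1}\rfloor$ gives the claimed $r/q+\calO_{q,r}(1/\sqrt D)$ bound, with $k$ treated as fixed in the hidden constant.

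\textbf{Main obstacle.} I expect the pruning analysis to be the only real issue, namely making the sampling density compatible with arity $k$ so that the expected degree stays below $\tilde D$. I would handle it first by the rescaling above.

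A fallback would be to pad each arity-$3$ hard constraint from \Cref{thm:bounded_degree_linsat} with $k-3$ fresh variables, each used once with coefficient $1$. That preserves the degree bound but breaks soundness, because the fresh variables can be set to satisfy every constraint. So padding fails, and I would rely on the direct arity-$k$ route through \Cref{rem:arity-k}.
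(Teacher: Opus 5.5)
Your proposal is correct and follows the paper's route: you rerun the proof of \cref{thm:bounded_degree_linsat} on the arity-$k$ hard instances of \cref{rem:arity-k}, using that subsampling, pruning and the acceptance-set lift act on whole constraints and so preserve exact arity $k$. Your rescaling of the sample count to $\tilde D m/e^2$ supplies a detail the paper's ``applies verbatim'' skips: with $\tilde D N/e^2$ samples and $N=km$, the mean occurrence is $k\tilde D/e^2$, which exceeds $\tilde D$ once $k\ge 8$. However, after rescaling, the union bound over $q^{km}$ assignments needs $\varepsilon > e\sqrt{k\ln q/(2\tilde D)}$, so the concentration term is not $k$-independent as you claim. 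This slip is harmless for fixed $k$, and your final $\calO_{q,k}$ bound already accounts for it.
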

\begin{proof}
The reduction of \Cref{thm:bounded_degree_linsat} applies verbatim to any arity $k\geq 3$, since both Trevisan's subsampling and the acceptance-set lift of Ref.~\cite{kramer2026tight} act only on whole constraints and never change which coefficients are nonzero, so they preserve arity exactly.
Thus, the reduction can also be used on hard input instances of \maxeklinq{$k$}{$q$} with arity exactly $k$ (see \cref{rem:arity-k}), yielding \maxeklinsat{$k$}{$q$}{$r$} instances of the same exact arity $k$, degree at most $D$, and the same gap.
\end{proof}

\begin{remark}[Imperfect completeness and the complexity assumption]
    Despite using the same degree-reduction technique as Trevisan, our underlying complexity assumption is $\NP \not\subseteq \BPP$ (and, analogously, $\NP \not\subseteq \BQP$ if quantum algorithms are used subsequently) rather than $\NP \not\subseteq \coRP$.
    The difference is rooted not in the reduction but in the source hardness: Trevisan starts from H\r{a}stad's \emph{perfectly complete} gap for \maxeksat{$3$} (a \emph{satisfiable} instance versus one of value $\le 7/8 + \varepsilon$), whereas our source, H\r{a}stad's $(1-\varepsilon,\ 1/q+\varepsilon)$-gap for \maxeklinq{$3$}{$q$} (\cref{thm:hastad}), has \emph{imperfect completeness}.
    The only randomized step in either reduction is the constraint-subsampling step~(iii); all other steps of the chain are deterministic.
    With perfect completeness, a satisfying assignment is preserved by subsampling with probability $1$, so Trevisan's reduction errs only on the soundness side and yields $\NP \subseteq \coRP$. With imperfect completeness, both error directions occur, each with exponentially small probability:
    \begin{itemize}
        \item \emph{Soundness direction.}
        For an input NO-instance, every assignment satisfies at most an $(r/q + \varepsilon)$--fraction of constraints.
        Fixing an optimal assignment, subsampling can over-represent the constraints it satisfies, pushing its empirical value above the YES-threshold $(1-\varepsilon)$ with very low but non-zero probability.
        The subsequent pruning afterwards can then only increase the satisfied fraction further.
        \item \emph{Completeness direction.} An optimal assignment to a YES-instance only satisfies a $(1-\varepsilon)$--fraction of constraints, leaving an $\varepsilon$--fraction unsatisfied.
        If subsampling over-represents the unsatisfied constraints of every assignment simultaneously---which may again occur with low but non-zero probability---then the reduction can no longer certify its output as a YES-instance.
    \end{itemize}
    Both directions therefore contribute (exponentially small) error, placing the reduction in the two-sided regime and yielding $\NP \subseteq \BPP$ rather than $\NP \subseteq \coRP$. In Trevisan's setting the completeness-side error vanishes ($\varepsilon = 0$ on the YES side), which accounts for the discrepancy.
    Consequently, assuming $\NP \not\subseteq \BPP$ (respectively $\NP \not\subseteq \BQP$), no classical (respectively quantum) polynomial-time algorithm can approximate \maxeklinsat{$k$}{$q$}{$r$} of bounded degree $D$ beyond $r/q + \calO_{q,r}(1/\sqrt{D})$ on worst-case instances for $k\geq 3$.
\end{remark}

\begin{remark}[Limiting cases of \cref{thm:bounded_degree_linsat}]\label{rem:limiting_cases}
The hard instances produced by the reduction have arity exactly $3$. Setting $q = 2$ and $r = 1$ recovers the claimed hardness for bounded-degree \maxekxor{$3$}.
In the limit $D \to \infty$, the $\calO_{q,r}(1/\sqrt{D})$ term vanishes, and the bound collapses to the unbounded-degree wall $r/q$ of \cref{thm:inapproximability_of_max_linsat}, consistent with the fact that H\r{a}stad's PCP reduction~\cite{hastad2001} produces instances with unbounded variable degree. \Cref{thm:bounded_degree_linsat} thus unifies the bounded- and unbounded-degree settings.
\end{remark}

\section{Quantum and classical algorithms for bounded degree instances}
\label{sec:algos_bounded_degree}

\begin{figure}[t]
    \centering
    \includegraphics[width=0.9\linewidth]{monotone_smooth_landscape_rq.png}
    \caption{
    Approximability landscape for bounded-degree \maxeklinsat{$k$}{$q$}{$r$} as a function of $1/\sqrt{D}$ and normalized decoding radius $\ell/m$ of the dual code used in DQI~\cite{Jordan2024DQI}. The vertical axis is the approximation ratio. The plane interpolates between (i) worst-case bounded-degree instances with no decodable structure ($\ell/m=0$) and (ii) structured instances with decoding gain ($\ell/m>0$). For $\ell/m=0$, the blue curve shows the worst-case hardness bound $r/q + \mathcal{O}(1/\sqrt{D})$ from \cref{cor:arity}, tight up to constants due to Barak et al.~\cite{Barak2015} (for $r=1$ and $q=2$). The orange and black curves indicate representative scalings of algorithms. For fixed $D$, the magenta curve shows the DQI semicircle law exploiting decodable structure. The hard instances produced by the reduction have $\ell/m = 0$, while the DQI regime reflects structured instances with $\ell/m > 0$.
    Characterizing the tight approximability across the full $(D, \ell/m)$ plane remains an important open problem. 
    }
    \label{fig:sketch}
\end{figure}

The resulting landscape for \maxlinqr{$q$}{$r$} is also graphi\-cally summarized in \cref{fig:sketch}. We now complement the hardness results with the algorithmic landscape for bounded-degree \maxlinqr{$q$}{$r$}. The statements and algorithms for the special case of \maxekxor{$k$} are summarized in \cref{tab:constants}.

\begin{table}[ht]
\centering
\small
\setlength{\aboverulesep}{0pt}
\setlength{\belowrulesep}{0pt}
\setlength{\abovetopsep}{0pt}
\setlength{\belowbottomsep}{0pt}
\setlength{\tabcolsep}{5pt}
\begin{tabular}{@{}lp{5cm}c@{}}

\toprule
Algorithm & Advantage over $1/2$ & Q/C \\
\midrule

\multicolumn{3}{@{}l}{\textbf{Worst-case guarantees}} \\

H\r{a}stad greedy~\cite{Hastad2000}
& $\Omega(1/D)$ 
& C \\

Depth-$1$ QAOA~\cite{FGG14bounded}
& $c/(\sqrt{D}\ln D)$, $c =1/101$
& Q \\

Barak et al.~\cite{Barak2015} (odd $k$ or triangle-free instances)
& $\exp(-\calO(k))/\sqrt{D}$
& C \\

\midrule

\multicolumn{3}{@{}l}{\textbf{Complexity-theoretic limitations}} \\

$\NP$-hardness (cf.~\cite{Trevisan2001,Trevisan2015blog} and \cref{cor:arity})
& optimal worst-case scaling: $\Theta_k(1/\sqrt{D})$ (up to constants) 
& -- \\

\midrule

\multicolumn{3}{@{}l}{\textbf{Information-theoretic upper bound}} \\

DQI with classical decoders\newline
(\cref{rem:dqi_ceiling})
& $\calO(1/\sqrt{D \log D})$
& Q \\

DQI with quantum decoders\newline 
(\cref{rem:dqi_ceiling})
& $\calO(1/\sqrt{D})$
& Q \\

\bottomrule
\end{tabular}

\caption{
Approximation landscape for bounded-degree \maxekxor{$k$} at large degree $D$. The table summarizes worst-case algorithmic guarantees, complexity-theoretic hardness, and information-theoretic ceilings for DQI on $(k,D)$-regular instances. The hardness results and the algorithm of Barak et al.\ together show that the optimal worst-case scaling in $D$ is $\Theta(1/\sqrt{D})$ up to constant factors.
}
\label{tab:constants}
\end{table}

\subsection{Bounded-degree \maxxor}

\paragraph{Worst-case algorithms and optimal scaling.}
H\r{a}stad~\cite{Hastad2000} has shown first that bounded-degree CSP instances are not approximation-resistant: For any CSP with degree at most $D$, a greedy algorithm achieves $\mu + \Omega(1/D)$ for $\mu$ being the random assignment threshold. For \maxekxor{$k$} this gives $1/2 + \Omega(1/D)$. Depth-$1$ QAOA improved the advantage to $\Omega(D^{-3/4})$~\cite{FGG14bounded}. Barak et al.~\cite{Barak2015} (see also Ref.~\cite{hastad2015improved}) achieved the optimal scaling classically: for \maxekxor{$k$} with odd $k$ and degree bound $D$, their algorithm finds an assignment satisfying at least a $1/2 + \exp(-\calO(k))/\sqrt{D}$ fraction of constraints.\footnote{A subsequent version of Ref.~\cite{FGG14bounded} independently obtained $1/2 + 1/(101\sqrt{D}\ln D)$ and $1/2 + \mathrm{const}/\sqrt{D}$ on ``typical'' instances.} Combining with the bounded-degree hardness from \cref{cor:arity} pins down the optimal worst-case scaling:

\begin{observation}[Optimal scaling for bounded-degree 
\maxekxor{$k$}]\label{obs:scaling}
For \maxekxor{$k$} with fixed odd $k$, degree at most $D$, and $D$ sufficiently large, the optimal polynomial-time approximation ratio on worst-case instances 
is $1/2 + \Theta_k(1/\sqrt{D})$: the hardness bound $1/2 + \calO(1/\sqrt{D})$ follows from \cref{cor:arity} (assuming $\NP \not\subseteq \BPP$), and the matching algorithm achieving $1/2 + \exp(-\calO(k))/\sqrt{D}$ is due to Barak et al.~\cite{Barak2015}. For even $k$, the algorithmic bound requires an additional triangle-freeness assumption.
\end{observation}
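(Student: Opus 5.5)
The plan is to prove the two matching halves separately: a hardness upper bound on what polynomial-time algorithms can achieve, and an algorithmic lower bound from Barak et al. For fixed odd $k$, "optimal ratio $1/2+\Theta_k(1/\sqrt{D})$" means that constants $0<c_k\le C_k$ exist with two properties. First, some polynomial-time algorithm achieves $1/2+c_k/\sqrt{D}$ on every instance of degree at most $D$. Second, for every $\varepsilon>0$, no polynomial-time algorithm achieves $1/2+(C_k+\varepsilon)/\sqrt{D}$ on all such instances unless $\NP\subseteq\BPP$.

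\emph{Hardness side.} Specialize \cref{cor:arity} to $q=2$, $r=1$, the given $k\ge 3$, and $D$ large. This gives a randomized reduction producing \maxekxor{$k$} instances of degree at most $D$ with gap $(1-\varepsilon,\ 1/2+C_k/\sqrt{D})$, where $C_k$ is the $\calO_{2,1}(\cdot)$ constant from the union-bound step of \cref{thm:bounded_degree_linsat} plus the pruning loss $\calO(1/D)$. Suppose some polynomial-time algorithm guaranteed a ratio strictly above $1/2+C_k/\sqrt{D}$ on all such instances, as a fraction of $m$ or of $\OPT$. Since $\OPT\le m$ in either reading, its output would exceed the soundness threshold on YES-instances, because $1-\varepsilon$ times a ratio above $1/2+C_k/\sqrt D$ still lies above the NO value once $\varepsilon$ is small. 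It would stay below that threshold on NO-instances. Composing the algorithm with the randomized reduction, which has two-sided error as in the preceding remark, then decides an $\NP$-hard gap problem in $\BPP$.

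\emph{Algorithmic side.} Invoke Barak et al.~\cite{Barak2015} directly. For odd $k$ and degree at most $D$, their algorithm outputs in polynomial time an assignment satisfying at least $(1/2+\exp(-\calO(k))/\sqrt{D})m\ge (1/2+\exp(-\calO(k))/\sqrt D)\,\OPT$ constraints. This supplies $c_k=\exp(-\calO(k))$.

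\emph{Even $k$.} Their guarantee needs triangle-freeness. The reason is that for even $k$ an assignment and its complement satisfy the same constraints, so the odd-$k$ correlation argument fails. I would only note this restriction rather than prove anything for even $k$.

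\emph{Main obstacle.} The delicate step is matching normalizations. \Cref{cor:arity} constrains arity and degree but says nothing about regularity, and Barak et al.\ allow multi-instances and degree counted with multiplicity. I would check that Trevisan's sampling in step (iii) may produce repeated constraints, and that the paper's degree definition counts each occurrence, so that both results apply to the same class of instances. If sampling yields duplicates that the paper's degree count does not see, I would count multiplicity explicitly in the degree bound. This only changes constants.
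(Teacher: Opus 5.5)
Your proposal is correct and follows the paper's argument: you specialize \cref{cor:arity} to $q=2$, $r=1$ for the hardness side, with the two-sided randomized reduction giving the $\NP\not\subseteq\BPP$ assumption, and you cite Barak et al.\ for the matching $1/2+\exp(-\calO(k))/\sqrt{D}$ algorithm, with the triangle-freeness caveat for even $k$. One correction to your side remark: a guarantee of $1/2+\Omega(1/\sqrt{D})$ as a fraction of $m$ cannot hold on multi-instances containing a constraint together with its negation on the same scope, since every assignment satisfies exactly one of the two; so the compatibility check should concern same-scope constraints in the reduced instance (merge identical ones, cancel contradictory ones, measure against $\OPT$), not merely whether multiplicity is counted in the degree.
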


Despite extensive recent work on bounded-degree \maxkxor{} via DQI and QAOA, this inapproximability landscape appears not to have been explicitly connected to these algorithms. In light of \cref{cor:arity}, any improvement---classical or quantum---is confined to the constant prefactor $c_k$ in $1/2 + c_k/\sqrt{D}$. The worst-case constant $\exp(-\calO(k))$ from Barak et al.\ is not computed explicitly and is likely very small for moderate $k$.

\paragraph{Average-case constants on Gallager instances.}
On random $(k,D)$-regular instances drawn from Gallager's ensemble, a richer picture
emerges; we summarize the landscape here, following the systematic catalog of
Ref.~\cite{shutty2026optimizationusinglocallyquantumdecoders}. Prange's infor\-ma\-tion-set decoding~\cite{Prange} solves a random $n \times n$ subsystem exactly and satisfies the residual constraints in expectation, giving $1/2 + k/(2D)$.
All other average-case algorithms achieve the $1/\sqrt{D}$ improvement. Ref.~\cite{shutty2026optimizationusinglocallyquantumdecoders} establishes that in the large $D$ limit, 
both Regev/DQI with locally-quantum decoders (Regev+FGUM) and the classical Turbo Prange heuristic achieve $1/2 + 1/\sqrt{2\pi D}$, arising from the central limit theorem applied to binomial deviations. Remarkably, the relatively simple FGUM decoder thus already matches the $1/\sqrt{D}$ scaling and the exact constant $1/\sqrt{2\pi}$ of the best classical heuristic.
Notably, the upgrade from Prange's order $1/D$-improvement to Turbo Prange's order $1/\sqrt{D}$-improvement comes from a single modification: a greedy local improvement step. 
QAOA at depth $p$ achieves $1/2 + \nu_{p}^{[k]}\sqrt{k/(2(D{-}1))}$ with numerically calculated values $\nu_p^{[k]}$ for $k$ up to $6$ given in Ref.~\cite{Basso2021}.
A systematic comparison of the constants achieved by these algorithms, including DQI with various decoders, QAOA at different depths, and classical heuristics, is provided in Ref.~\cite{shutty2026optimizationusinglocallyquantumdecoders}. Additionally, Piveteau~\cite{Piveteau2026} computes the constants achieved by combining DQI with \emph{belief propagation with quantum messages}  (BPQM)~\cite{Renes2017,BMP2022,mandal2026beliefpropagationquantummessages} for different combinations of $(k,D)$.

In a closely related random regime ($D$-regular instances with random parities), Ref.~\cite{MarwahaHadfield2022} derives a closed-form expression for the approximation ratio of depth-1 QAOA, showing that it achieves $1/2 + c_k/\sqrt{D}$ with exactly computable constants $c_k$ that rise monotonically with~$k$. For random $(k,D)$-regular instances, they also establish a theoretical bound of the optimum concentrating around $1/2 + \Theta(\sqrt{k/D})$, which confirms that the $1/\sqrt{D}$ scaling reflects the structure of optimal solutions themselves and not just algorithmic limitations.

\paragraph{Summary.}
The worst-case scaling for bounded-degree \maxekxor{$k$} is understood up to $k$-dependent constants: current algorithms and hardness results together establish a $\Theta_k(1/\sqrt{D})$ scaling. 
It should be noted, however, that for large even $k$ on instances containing triangles, the algorithmic picture remains incomplete as Barak et al.'s algorithm~\cite{Barak2015} requires odd $k$ or triangle-freeness.
The same $1/\sqrt{D}$ scaling also appears in average-case analyses of random Gallager instances, though no formal connection between worst-case hardness and average-case performance is known. Any quantum advantage for worst-case bounded-degree \maxekxor{$k$} in this setting must appear through the constant prefactor, not through the asymptotic dependence on the degree bound.

\subsection{Bounded-degree \maxlin}
For $q > 2$, the landscape is qualitatively different: the hardness side is established by \cref{thm:bounded_degree_linsat} (and \cref{cor:arity,cor:bounded_degree_general}) at $r/q + \calO_{q,r}(1/\sqrt{D})$, but the algorithmic side lags behind. The only worst-case guarantee for general alphabets remains H\r{a}stad's greedy method~\cite{Hastad2000} which extends routinely to CSPs over any fixed finite abelian group (in particular $\mathbb{F}_q$), achieving $r/q + \Omega(1/D)$---a full quadratic factor weaker in $D$ than the hardness ceiling.

The algorithm of Barak et al.~\cite{Barak2015} is inherently Boolean, relying on $\{\pm 1\}$ Fourier analysis and Bonami--Beckner hypercontractivity, and does not appear to extend in a straightforward way to $\mathbb{F}_q$ with $q > 2$. Extending it would require nontrivial generalizations of Boolean hypercontractive and invariance-based arguments. No $\mu + \Omega(1/\sqrt{D})$ (with $\mu$ being the random assignment threshold) algorithm for bounded-degree CSPs over non-Boolean alphabets is currently known. Whether any algorithm---classical or quantum---can match the $1/\sqrt{D}$ hardness scaling of \cref{thm:bounded_degree_linsat} (and \cref{cor:arity,cor:bounded_degree_general}) for $q > 2$ remains open (see Discussion).

\subsection{Information-theoretic limitations of DQI in the 
bounded-degree setting}
\label{rem:dqi_ceiling}

The algorithmic gap for $q > 2$ identified in the previous subsection is particularly interesting from the perspective of DQI. Ref.~\cite[Sections~13.1--13.2]{Jordan2024DQI} establishes information-theoretic ceilings on DQI's performance for max-XORSAT under both classical and quantum decoding; we extend these to \maxlinqr{$q$}{$r$} and derive their implications for bounded-degree instances.

The DQI semicircle law~\cite{Jordan2024DQI}, combined with information-theoretic decoding limits, constrains DQI's approximation ratio on bounded-degree instances independently of any complexity assumption. These bounds are complementary to the complexity-theoretic hardness results above: they constrain the DQI framework itself, regardless of computational efficiency.
Consider a \maxeklinsat{$k$}{$q$}{$r$} instance with constraint matrix $B \in \mathbb{F}_q^{m \times n}$ of row weight $k$ and column weight at most $D$. The dual code $C^\perp = \ker(B^T) \subseteq \mathbb{F}_q^m$ has rate $R = 1 - \operatorname{rank}(B)/m \geq 1 - n/m$ (with equality generically for full-column-rank instances). 
We compute the DQI ceilings on $(k,D)$-regular instances, the natural setting in which $n/m=k/D$. Our hardness results (\cref{thm:bounded_degree_linsat,cor:arity}) establish hardness for the broader class of degree-$\leq D$ instances; since $(k,D)$-regular instances form a restriction of this class, these bounds serve as the relevant complexity-theoretic reference scale rather than a proven ceiling on the regular subclass itself. The comparisons below are therefore between the $D$-scalings.
Expanding the semicircle law~\cite{Jordan2024DQI} for small decoding radius $\ell/m$ gives
\begin{equation}\label{eq:semicircle_expansion}
    \alpha_{\mathrm{DQI}} = \left(
      \sqrt{\frac{\ell}{m}\!\left(1 - \frac{r}{q}\right)}
    + \sqrt{\frac{r}{q}\!\left(1 - \frac{\ell}{m}\right)}
  \right)^{\!2} = \frac{r}{q} 
    + \frac{2\sqrt{r(q-r)}}{q}\sqrt{\frac{\ell}{m}} 
    + \calO\!\left(\frac{\ell}{m}\right),
\end{equation}
if $r/q \leq 1 - \ell/m$, and $\alpha_{\mathrm{DQI}} = 1$ otherwise. Here, $\ell$ denotes the asymptotic decoding radius under the induced channel model, i.e.,  the maximum number of random errors in $C^\perp$ that the decoder can reliably correct.

\begin{enumerate}
    \item \emph{Classical decoders.} For the binary case ($q = 2$), Jordan et al.~\cite[Theorem~13.2]{Jordan2024DQI} show that DQI with classical decoders is bounded by
    \begin{equation}
        \alpha_{\mathrm{DQI}}^{\mathrm{cl}} \leq \frac{1}{2} 
        + \sqrt{\frac{n/m}{\log(m/n)}},
    \end{equation}
    using the Shannon capacity of the binary symmetric channel 
    \cite{Shannon1948,CoverThomas}. 
    The $\log(m/n)$ denominator arises because $C_2(p) = 1 - H_2(p) \sim 1 - p\log_2(1/p)$ for small $p$~\cite{Jordan2024DQI}, so inverting the rate condition $1 - n/m \leq C_2(\ell/m)$ yields $\ell/m = \calO((n/m)/\log(m/n))$. For general $q$, the same argument applied to the $q$-ary symmetric channel~\cite{Shannon1948,CoverThomas} gives $C_q(p) = \log_2 q - H_q(p)$ with $H_q(p) \sim p\log_2((q-1)/p)$ for small $p$, yielding
    \begin{equation}
        \alpha_{\mathrm{DQI}}^{\mathrm{cl}} \leq \frac{r}{q} 
        + \calO_{q,r}\!\left(\sqrt{\frac{n/m}{\log(m/n)}}\right),
    \end{equation}
    which on $(k,D)$-regular instances specializes to $r/q + \calO_{q,r}(1/\sqrt{D \log D})$. This would improve on H\r{a}stad's $r/q + \Omega(1/D)$~\cite{Hastad2000} but falls short by a $\sqrt{\log D}$ factor of the $r/q + \calO(1/\sqrt{D})$ that \cref{cor:arity} establishes for bounded-degree instances.

    \item \emph{Quantum decoders.} Quantum decoders may partially evade the classical $\log(1/p)$ entropy penalty, because coherent decoding accesses non-orthogonal quantum states rather than classical syndrome bits. For the binary case ($q = 2$), Jordan et al.~\cite[Section~13.2]{Jordan2024DQI} model the coherent errors in DQI as a classical-quantum channel and show via the Holevo bound~\cite{holevo1973} that
    \begin{equation}
        \alpha_{\mathrm{DQI}}^{\mathrm{q}} \leq 1 
        - H_2^{-1}\!\left(1 - \frac{n}{m}\right).
    \end{equation}
    For $(k,D)$-regular instances with $n/m = k/D$, the expansion $H_2(1/2 - \delta) = 1 - \Theta(\delta^2)$ near capacity gives $1/2 + \Theta(1/\sqrt{D})$, matching the $1/\sqrt{D}$ scaling of the bounded-degree hardness bound of \cref{cor:arity}.
    The key reason the $\sqrt{\log D}$ gap disappears is that the Holevo capacity of the induced channel satisfies $\chi(p) = 1 - \Theta(p)$ for small $p$~\cite{Jordan2024DQI}, losing capacity linearly in $p$ rather than as $p \log(1/p)$.
    For general $q$, 
    the natural $q$-ary generalization maps each symbol $i \in \mathbb{F}_q$ to the state vector $|i_p\rangle = \sqrt{1-p}\,|i\rangle + \sqrt{p/(q-1)}\sum_{j \neq i}|j\rangle$ with $p = \ell/m$. 
    Since the outputs are pure, the Holevo capacity equals the von Neumann entropy of the average state $\bar{\rho} = \frac{1}{q}\sum_i |i_p\rangle\langle i_p|$. A direct computation gives $\chi_q(p) = \log_2 q - 2p/\ln 2 + \calO(p^{3/2})$, with the leading coefficient independent of $q$---again linear in $p$, without the logarithmic penalty.
    The rate condition then yields $\ell/m = \Theta(n/m)$, and via \eqref{eq:semicircle_expansion}:
    \begin{equation}
        \alpha_{\mathrm{DQI}}^{\mathrm{q}} \leq \frac{r}{q} 
        + \calO_{q,r}\!\left(\frac{1}{\sqrt{D}}\right)
    \end{equation}
    on $(k,D)$-regular instances, matching the $1/\sqrt{D}$ scaling of \cref{cor:arity} for all $(q,r)$.
\end{enumerate}

Thus, \emph{DQI with classical decoders} faces an information-theoretic $\sqrt{\log D}$ limitation on bounded-degree instances. By contrast, \emph{DQI with quantum decoders} is information-theoretically compatible with the $1/\sqrt{D}$ scaling: for $q = 2$ by following Ref.~\cite{Jordan2024DQI}, and via the $q$-ary channel generalization above for $q > 2$. 

This underscores the importance of quantum decoding as a direction for DQI: As observed in Ref.~\cite{shutty2026optimizationusinglocallyquantumdecoders}, quantum decoders significantly broaden the algorithmic toolbox for bounded degree instances, and our results show that this broadening is also justified from a complexity-theoretic and information-theoretic viewpoint. To the best of our knowledge, no efficient quantum decoder achieving the Holevo-capacity scaling for the codes arising in bounded-degree DQI instances is currently known, though several approaches toward \emph{efficient quantum decoding} have been developed~\cite{Renes2017,BMP2022,buzet2025finegrainedunambiguousmeasurements,chailloux2024softdecoders,mandal2026beliefpropagationquantummessages,shutty2026optimizationusinglocallyquantumdecoders,piveteau2025_quantum_decoding}.

\section{Discussion}
\label{sec:discussion}

In this work, we have explored and identified the best approximation ratios achievable by polynomial-time algorithms for an important family of optimization problems.
The central message of this work is that the $1/\sqrt{D}$ scaling observed across all known algorithms for bounded-degree \maxkxor{} is complexity-theoretically forced, and that the same scaling extends to \maxlinqr{$q$}{$r$} over arbitrary finite fields.
We conclude by highlighting several concrete questions that remain open in light of the bounded-degree inapproximability results established above.

\begin{enumerate}
    \item \textbf{Optimal worst-case constant.} What is the tight worst-case constant $c^*(k)$ such that $1/2 + c^*(k)/\sqrt{D}$ is achievable but $1/2 + (c^*(k) + \varepsilon)/\sqrt{D}$ for any $\varepsilon > 0$ is $\NP$-hard? The algorithm of Barak et al.~\cite{Barak2015} gives $c^*(k) \geq \exp(-\calO(k))$, but  the true constant could be much larger. Equivalently, is the $\exp(-\calO(k))$ dependence necessary, or can it be improved to a polynomial in $k$?

        \item \textbf{Closing the gap for bounded-degree \maxlinqr{$q$}{$r$} with $q > 2$.} \Cref{cor:arity} establishes hardness at $r/q + \calO_{q,r}(1/\sqrt{D})$, but the best known worst-case algorithm for $q > 2$ remains H\r{a}stad's greedy method~\cite{Hastad2000} at $r/q + \Omega(1/D)$---a quadratic gap in $D$. The algorithm of Barak et al.~\cite{Barak2015} is inherently Boolean, relying on $\{\pm 1\}$ Fourier analysis and Bonami--Beckner hypercontractivity. A classical route to closing the gap would be to extend their technique to $\mathbb{F}_q$.
    A more speculative, and potentially quantum, route is suggested by the information-theoretic arguments made in \cref{rem:dqi_ceiling} that suggest that DQI combined with a quantum decoding algorithm might feature the $r/q + \calO(1/\sqrt{D})$ scaling for bounded-degree instances. However, no quantum decoding algorithms are currently known that match the information-theoretically achievable decoding limit, which would yield this performance.
    We note that worst-case bounded-degree instances can have dual codes with minimum distance $\calO(1)$, in which case DQI reduces to random assignment regardless of decoder quality. The information-theoretic analysis of \cref{rem:dqi_ceiling} shows that even on instances with good coding-theoretic properties, DQI with classical decoders is limited to $r/q + \calO(1/\sqrt{D \log D})$. 
    We further note that our hardness is established for degree-at-most-$D$ instances; sharpening it to exactly $(k,D)$-regular instances---the setting of the average-case ensembles---remains open.

    \item \textbf{Quantum constant advantage for $k \geq 3$.} This raises the important question relevant for assessing potential quantum advantages \cite{MindTheGaps,VastWorld,Abbas_2024}: Can any quantum algorithm achieve a strictly larger constant $c$ than the best classical algorithm on bounded-degree \maxekxor{$k$}? On average-case bounded-degree $(k,D)$-regular \maxekxor{$k$}-instances drawn from the Gallager-ensemble, Ref.~\cite{shutty2026optimizationusinglocallyquantumdecoders} shows that Regev+FGUM is exactly matched by the classical Turbo Prange for all $(k, D)$. QAOA achieves larger constants at moderate depth, but it is unknown whether this advantage persists against all classical algorithms. For the related but distinct problem of MaxCut ($k = 2$) on $D$-regular large-girth graphs, the constant-prefactor competition is more advanced: the best known assumption-free classical algorithms achieve $c = 2/\pi \approx 0.637$~\cite{BarakMarwaha2022}, while QAOA at depth $p = 11$ provably exceeds this~\cite{Basso2021}. A conditional classical algorithm also exceeds $2/\pi$, but only assuming the \emph{no overlap gap property}~\cite{ElAlaoui2021}. However, MaxCut is qualitatively different---the cost function is quadratic, and the classical benchmark arises from SDP rounding rather than the hypercontractivity-based methods relevant to $k \geq 3$---so these results do not transfer directly. We stress that this competition over constants is not a second-order concern: since the approximation ratio lives in $[r/q, 1]$, the constant prefactor $c$ is precisely what an algorithm buys over random assignment, so a quantum algorithm with a strictly larger constant $c$ would constitute a genuine, operationally meaningful advantage on these instances.

    \item \textbf{Average-case vs.\ worst-case.} The average-case constants on Gallager instances vastly exceed the worst-case constant $\exp(-\calO(k))$. Does a complexity-theoretic explanation exist for why Gallager instances are easier, or is the gap an artifact of the analysis?
\end{enumerate}
We hope that this work can stimulate the answer of some of those questions.

\section*{Acknowledgements}
The authors warmly thank Franz Schreiber, Asad Raza, Elies Gil-Fuster, Alexander Nietner, and Jean-Pierre Seifert for stimulating discussions and feedback on this work. Moreover, the authors acknowledge support by the BMFTR (PraktiQOM, QuSol, HYBRID++, PQ-CCA), the Munich Quantum Valley, Berlin Quantum, the QuantERA, the Quantum Flagship (Millenion, PasQuans2), the DFG (CRC 183, SPP 2514), the Clusters of Excellence (MATH+, ML4Q), and the European Research Council (DebuQC).

\bibliographystyle{quantum}
\bibliography{main}
\end{document}